
\documentclass[journal]{IEEEtran}
\ifCLASSINFOpdf
\else
\fi
%
%

%
%

\usepackage{amsmath,amssymb,amsfonts,amsthm}
\usepackage{graphicx}
\usepackage{textcomp}
\usepackage{xcolor}
\usepackage{comment}
\usepackage{bm,bbm}
\usepackage{wrapfig}
\usepackage{subfigure}
\usepackage{caption}
\usepackage[ruled,vlined]{algorithm2e}
\usepackage{makecell}
\usepackage{booktabs}       
\usepackage{balance}

\newtheorem{definition}{Definition}
\newtheorem{assumption}{Assumption}
\newtheorem{theorem}{Theorem}
\newtheorem{corollary}{Corollary}[theorem]
\newtheorem{lemma}{Lemma}

\DeclareMathOperator{\Lap}{Lap}

\graphicspath{{imgs/}}
\hyphenation{op-tical net-works semi-conduc-tor}

\begin{document}

\title{On the Practicality of Differential Privacy in Federated Learning by Tuning  Iteration Times}
%
%
%
%

\author{Yao Fu,
        Yipeng Zhou,
        Di Wu,
        Shui Yu,
        Yonggang Wen,
        and Chao Li
\IEEEcompsocitemizethanks{
\IEEEcompsocthanksitem Yao Fu and Di Wu are with the Department of Computer Science, Sun Yat-sen University, Guangzhou, 510006, China, and Guangdong Key Laboratory of Big Data Analysis and Processing, Guangzhou, 510006, China (E-mail: fuyao7@mail2.sysu.edu.cn; wudi27@mail.sysu.edu.cn). 
\IEEEcompsocthanksitem Yipeng Zhou is with the Department
of Computing, FSE, Macquarie University, Australia, 2122 (E-mail: yipeng.zhou@mq.edu.au).
\IEEEcompsocthanksitem Shui Yu is with the School of Computer Science, University of Technology Sydney, Australia (E-mail: Shui.Yu@uts.edu.au).
\IEEEcompsocthanksitem Yonggang Wen is with the School of Computer Science and Engineering, Nanyang Technological University, Singapore (E-mail: ygwen@ntu.edu.sg).
\IEEEcompsocthanksitem Chao Li  is with Tencent Technology (Shenzhen) Co. Ltd, China.(E-mail:  ethancli@tencent.com).
}
}

%
%

\markboth{Journal of \LaTeX\ Class Files,~Vol.~14, No.~8, August~2015}%
{Shell \MakeLowercase{\textit{et al.}}: Bare Demo of IEEEtran.cls for Computer Society Journals}
%



\IEEEtitleabstractindextext{%
\begin{abstract}
In spite that Federated Learning (FL) is well known for its privacy protection when training machine learning models  among distributed clients collaboratively, recent studies have pointed out that the naive FL is susceptible to gradient leakage attacks. In the meanwhile,  Differential Privacy (DP) emerges as a promising countermeasure to defend against gradient leakage attacks. However, the adoption of DP by clients in FL may significantly jeopardize the model accuracy. It is still an open problem to understand the practicality of DP from a theoretic perspective. In this paper, we make the first attempt to understand the practicality of DP in FL through tuning the number of  conducted iterations. Based on the FedAvg algorithm,  we formally derive the convergence rate with DP noises in FL.  Then, we theoretically derive: 1) the conditions for the DP based FedAvg to converge as the number of global iterations (GI) approaches infinity; 2) the method to set the number of local iterations (LI) to minimize the negative influence of DP noises.   By further substituting the Laplace and Gaussian mechanisms into the derived convergence rate respectively, we show that: 3) The DP based FedAvg with the Laplace mechanism cannot converge, but the divergence rate can be effectively prohibited by setting the number of LIs with our method; 4) The learning error of the DP based FedAvg with the Gaussian mechanism can converge to a constant number finally if we use a fixed number of LIs per GI. 
To verify our theoretical findings, we conduct extensive experiments using two real-world datasets. The results not only validate our analysis results, but also provide useful guidelines on how to optimize model accuracy when incorporating DP into FL.
\end{abstract}

\begin{IEEEkeywords}
Federated Learning, Differential Privacy, Convergence Rate, Model Accuracy.
\end{IEEEkeywords}}

\maketitle

\IEEEdisplaynontitleabstractindextext

%
\IEEEpeerreviewmaketitle

\section{Introduction}\label{sec:introduction}

%
%
%
%
\IEEEPARstart{I}{n}  the past decade, machine learning models have demonstrated unprecedented data processing capability in numerous applications. To well train machine learning models, it is unavoidable to massively collect data samples from users, which gives rise to the concern on user privacy leakage \cite{yeom2018privacy,hitaj2017deep}. This concern has significantly hindered the wide application of machine learning techniques. In light of this, 
\emph{Federated Learning (FL)} \cite{pmlr-v54-mcmahan17a} has been proposed to allow decentralized clients to collaboratively train machine learning models by merely exchanging intermediate computations (\emph{i.e.}, gradients) with the parameter server (PS). Raw data samples are  locally kept on clients. However, it has been pointed out that the leaked gradient information can  still be exploited by malicious entities to crack user privacy \cite{wei2020framework, zhao2020idlg, zhu2019deep,236216,8737416}.\footnote{Users are used interchangeably with clients in this paper.} To further enhance the protection of user privacy,  \emph{Differential Privacy (DP)} has been employed to add additional noises to client gradients before they are uploaded to the PS \cite{bhowmick2018protection,wu2019value,wei2019federated}. 


Unfortunately, the adoption of DP  may lower the model accuracy significantly, which has become a major obstacle for the application of DP in FL~\cite{bhowmick2018protection,wu2019value,wei2019federated}. Taking the popular FedAvg algorithm~\cite{pmlr-v54-mcmahan17a} as an example, the naive adoption of DP in FedAvg may cause $2$-$10$ times training loss than the original model~\cite{wu2019value}. For many real-world applications, such high performance deterioration can significantly reduce the effectiveness of advanced machine learning models, and is somewhat unacceptable at all. As FedAvg represents a large family of gradient descent (GD) based algorithms widely used in FL, it is questionable how practical DP will be when being applied to FL.

Typically, clients and the PS in FL works together with the following manner: clients conduct a number of local iterations (LI) before their gradients plus DP noises are uploaded to the PS for aggregation. By receiving the computations from multiple selected clients, the PS conducts a round of  aggregation and distributes new results to clients for the next round of global iteration (GI) \cite{kairouz2019advances}. 
Apparently, it is only necessary to add the DP noises  to clients' gradients after the last round of LIs before they are disclosed to the PS. The noises distort the disclosed gradients against attacks, which meanwhile impair the convergence of FL \cite{wu2019value,wei2019federated,geyer2017differentially}.\footnote{Note that clients may have different privacy budgets and are independent in noise generation \cite{7113353,10.1145/2676726.2677005}} 

In this work, we investigate how to improve the practicality of DP in FL through tuning the number of conducted local or global iterations. Intuitively, we can improve the model accuracy if the number of LIs is tuned properly since there is no need to add noises after each LI except the last LI. 
Firstly, we derive the convergence rate of the FedAvg algorithm distorted by the DP noises. Secondly, we derive the conditions for the DP based FedAvg to converge, and formally derive the formula to set the number of LIs according to the DP mechanism to minimize the negative influence of DP noises. At last, 
we conduct case study by using  the Laplace mechanism and the Gaussian mechanism, through which we theoretically unveil that: 1) The learning error of the DP based FedAvg with the Gaussian mechanism can finally converge to a constant number if we use a fixed  number of LIs per GI; 2) The DP based FedAvg with the Laplace mechanism will diverge with the number of GIs, but the divergence rate can be lowered substantially by setting the number of LIs according to our method.


Overall, our contributions in this paper can be summarized as below:
\begin{itemize}
\item To the best of our knowledge, we are the first to theoretically study the practicality of DP in FL through tuning the numbers of LIs and GIs, which paves the way towards a better understanding of the complicated relation between the random noises introduced by DP and the final FL model accuracy. 
\item We formally derive the convergence rate of FedAvg with DP noises and the convergence conditions. By substituting the Laplace and Gaussian mechanisms into the convergence rate, we theoretically investigate how to optimally tune the number of LIs and GIs so as to minimize the influence of DP noises.
\item Extensive experiments with real-world  datasets are conducted to validate the correctness of our theoretic results, and demonstrate the practical merit of our work, which  provides us a way to optimize the FL model accuracy even if DP is adopted. 
\end{itemize}


The rest content is organized as follows. The related state-of-the-art works are discussed in Sec. \ref{Sec:related}. The preliminary knowledge is introduced in Sec. \ref{sec:preliminary}.  The DP-FedAvg algorithm is elaborated in Sec. \ref{sec:algorithm}.  The derivation and optimization of the converge rate of DP-FedAvg  are presented in Sec. \ref{sec:fedavg}. The experiment results are presented and explained in Sec. \ref{sec:experiment} before we finally conclude our paper in Sec. \ref{sec:conclusion}. 

 

\section{Related Work}\label{Sec:related}
\subsection{Federated Learning}

The FL framework was originally proposed in the works \cite{pmlr-v54-mcmahan17a,googleFL, bonawitz2019towards}, which applied FL on mobile devices to predict the next word of mobile users' text input. Both FedAvg and FedSGD algorithms used for neural network based models were empirically studied with extensive experiments in~\cite{pmlr-v54-mcmahan17a}. Later on, the convergence rate of FedAvg/FedSGD was analyzed in \cite{Li2020On} with non-IID data sample distributions.

Since the inception of FL, it has attracted tremendous attentions. \emph{Kairouz et al.}~\cite{kairouz2019advances} and \emph{Li et al.}~\cite{Li2020FederatedLC}  conducted a holistic overview of FL with  in-depth technical discussions on possible weaknesses of FL. It has been proved that FL can improve the protection of user privacy to some extent \cite{kairouz2019advances}. However, the leakage of the gradient information can still result in the leakage of user privacy, which can be exploited by malicious users to design various attack strategies \cite{wei2020framework, zhao2020idlg, zhu2019deep,236216,8737416}.

\subsection{Differential Privacy}
Other than FL, DP is another widely adopted approach to preserve user privacy \cite{10.1007/11787006_1}. The principle of DP is to add noises to individual users' private information before it is disclosed. Consequently, malicious attackers cannot exactly crack users' private information even if the information is leaked. Thanks to its ability for preserving privacy, it has been applied in personalized recommendation systems \cite{8290673}, location based services \cite{ding2017collecting}, meta learning \cite{li2020differentially} and so on. However, the shortcoming of DP is that the service quality will be lowered  because service providers cannot obtain precise user information as well \cite{bhowmick2018protection, wu2019value,NEURIPS2019_fc0de4e0}.

\subsection{Applying DP in FL}

DP can be applied to machine learning such as FL to enhance the privacy protection level. In FL, there exist two schemes to add DP noises: the global DP scheme and the local DP scheme. The former one adds noises to the aggregated gradient information on the server side before the gradient  information is distributed to clients \cite{geyer2017differentially,brendan2018learning}. Whereas the latter one adds noises to the gradient information on users  before it is uploaded to the server \cite{bhowmick2018protection,pihur2018differentiallyprivate,10.1145/3378679.3394533,seif2020wireless}. 

There are many existing  works \cite{Abadi_2016,geyer2017differentially,brendan2018learning} that targeted to design mechanisms to add  global DP noises in machine learning systems such as FL.  For example, the work \cite{geyer2017differentially} particularly focused on using DP to mask whether a client participates in the training (\emph{i.e.}, client-level DP).

Prior works on the incorporation of LDP are elaborated as follows. \emph{Wu et al.} \cite{wu2019value} theoretically analyzed the convergence performance when local DP is used in a distributed learning system.  Only the gradient descent algorithm is considered in this work, and its analysis proved that the GD algorithm can converge if the sample population approaches infinity, which however is impossible in practice. In addition, the scenario considered in this paper is not a typical FL setting, and the influence of the non-IID sample distribution is ignored. 
\emph{Wei et al.} \cite{wei2019federated} considered both global DP and local DP in an FL system. Their theoretical analysis validated that the number of global iterations and the number of engaged clients in each global iteration should be neither too large nor too small to minimize the final loss function. However, only the FedSGD algorithm with  IID sample distribution was mentioned in this work, which was not consistent with an FL scenario.
The analysis of the FedAvg algorithm and the influence of the number of local iterations were absent in the above work. 



Other than the above LDP mechanism that can guarantee the privacy budget over all rounds of global iteration, there exist  a number of other works that designed and analyzed a loose LDP mechanism that can only  guarantee the privacy budget for a single round of global iteration~\cite{bhowmick2018protection,seif2020wireless,liu2020fedsel}.  For example, \emph{Seif et al.} \cite{seif2020wireless} relaxed the local $\epsilon$-DP in wireless FL settings, which gave fixed noises in each iteration and used advanced composition rule to track the total privacy leakage over $T$ iterations. However,  the privacy leakage will increase to infinity with $T$, though they can obtained a converged upper bound finally. 

In summary,  prior works have demonstrated that it is promising to incorporate DP into FL. However, how to minimize the negative influence of the noises incurred by DP has been overlooked by existing works. Especially, how to set up the number of LIs and how it affects the final model accuracy have not been explored yet, and this gap will be filled by our work. 

Before we introduce our algorithm, we list the symbols and notations used in this paper, as shown in Table~\ref{tab:notations}. Some ofthem will be defined later in the following sections.

\begin{table}[!htbp]
\centering
\caption{Notations}
\label{tab:notations}
\begin{tabular}{cm{0.8\linewidth}}
\toprule
Symbol & Description\\
\midrule
$\mathcal{N}$ & the set of all clients \\
$N$ & the number of all clients \\
$\mathfrak{M}(\theta)$ & the model with parameters $\theta$\\
$p$ & the dimension of the parameter\\
$l$ & the client\\
$n_l$ & the number of data owned by client $l$\\
$\mathcal{D}_l$ & the dataset of client $l$\\
$d$ & a batch of data\\
$f$ & the loss function\\
$\nabla f$ & the gradient of function $f$\\
$t$ & the index of GI\\
$k$ & the index of all iterations\\
$\mathcal{P}_t$ & the engagement pool in $t$-th GI\\
$b$ & the size of the engagement pool\\
$\eta$ & the learning rate\\
$T$ & the total number of iterations\\
$E$ &  the  number  of  LIs  to  be  executed \\
$T_g$ & the  total  number  of  GIs  to  be  executed\\
$T_l$ & the number of GIs that client $l$ will participate \\
$\epsilon,\delta$ & DP parameters\\
$\mathcal{M}$ & DP mechanism\\
$\mathcal{C}_E$ & the set of global communication round\\
$\mathbf{w}$ & the noise generated by DP mechanism\\
$\Gamma$ & the non-IID measurement\\
\bottomrule
\end{tabular}
\end{table}

\section{Preliminaries}\label{sec:preliminary}
\subsection{FedAvg}
In FL, there is a group of clients $\mathcal{N}$ with cardinality $N$ that aim to collaboratively train a model $\mathfrak{M}(\theta)$, where $\theta\in\mathbb{R}^p$ represents the parameters of the model. Each client $l$ has a private data set $\mathcal{D}_l$ with cardinality $n_l$. A trusted parameter server is responsible for sending model parameters to clients and aggregating updates reported back from clients in each round of global iteration. 
To train the model $\mathfrak{M}(\theta)$, which can be represented by the loss function $f(\theta)$, a common approach is to design gradient descent (GD) based algorithms (\emph{e.g.}, FedAvg, FedSGD \cite{pmlr-v54-mcmahan17a})  to iteratively reduce the loss function.

In FedAvg, each client updates the model on the local dataset $\mathcal{D}_l$ for $E$ epochs, following the rule:
\begin{equation*}
\theta\leftarrow\theta-\eta\nabla f(\theta;d),   
\end{equation*}
where $d\in \mathcal{D}_l$ represents a batch of data samples owned by client $l$ and $\eta$ is the learning rate. 

After every $E$ local iterations,  the PS conducts a global iteration by aggregating clients' models following the rule 
\begin{equation*}
\theta_{t+1}\leftarrow\frac{1}{\sum_{l\in\mathcal{P}_t}n_l}\sum_{l\in\mathcal{P}_t}n_{l}\theta_{t+1}^{l},
\end{equation*}
where $\mathcal{P}_t$ represents the set of clients engaged in the global iteration $t$. FedSGD is a special case of FedAvg by setting $E=1$ and $d = \mathcal{D}_l$. 
For both FedAvg and FedSGD, the PS can only communicate with a certain number of  clients, which may be randomly selected due to the limitation of communication capacity. 


To ease our discussion, we define the following notations. Let $T_g$ denote the total number of GIs to be executed,   $E$  denote the number of LIs to be executed by each engaged client per GI and $b$ denote the number of clients engaged by the PS in each round of GI. 
Note that $b$ clients execute $E$ local iterations in parallel, and thus the total number of iterations $T$ to be executed is equal to the product of $E$ and $T_g$, \emph{i.e.}, $T = ET_g$.


\subsection{Differential Privacy}

Both FedAvg and FedSGD assume that the PS can be trusted and clients' privacy cannot be cracked through gradient information. However, such assumptions are not always valid \cite{zhu2019deep, zhao2020idlg,236216,8737416}. Thereby, the DP technique is introduced to add noises to gradients generated by clients before their gradients are aggregated. 

We briefly introduce the essential concepts in  DP~\cite{10.1007/11787006_1,dwork2014algorithmic} before we introduce how to incorporate DP into FedAvg. 
\begin{definition}[($\epsilon,\delta$)-Differential Privacy]
A randomized function $\mathfrak{Q}$ gives $\epsilon$-differential privacy if for any data sets $\mathcal{D}_1$ and $\mathcal{D}_2$ differing on at most one entry, and all $\mathcal{S}\subseteq Range(\mathfrak{Q})$,
\begin{equation*}
    \Pr\left\{\mathfrak{Q}(\mathcal{D}_1)\in \mathcal{S}\right\}\le\exp(\epsilon)\times\Pr\left\{\mathfrak{Q}(\mathcal{D}_2)\in \mathcal{S}\right\}+\delta.
\end{equation*}
\end{definition}

The traditional Laplace mechanism gives $(\epsilon, 0)$-DP, while the Gaussian mechanism gives $(\epsilon,\delta)$-DP. By tolerating some possibility of fails (represented by $\delta$), the Gaussian mechanism performs better on the utility under composition. We will introduce detailed Laplace and Gaussian mechanisms later. For generality, we use $\mathcal{M}$ to denote the noise generator of DP mechanisms.




\section{Client Based DP-FedAvg Algorithm }
\label{sec:algorithm}

In this section, we  introduce the client based DP-FedAvg algorithm and the implementation of two most frequently used DP mechanisms, \emph{i.e.}, Laplace and Gaussian. 

\subsection{ Algorithm Framework}

Before the algorithm details are presented, we define some useful notations as follows.
The index set of FL GIs is defined as $\mathcal{C}_E=\left\{1*E,2*E,\dots,T_g*E\right\}$. For any iteration in $\mathcal{C}_E$, the PS aggregates parameters returned by clients.  For the sake of clarity, we use two indicators $t$ and $k$ throughout this paper where $k=0, 1, \dots, T$ represents the index of any LI and $t=0,1,2,\dots,T_g$ represents  the index of any GI. The relation between $k$ and $t$ is $t=\lfloor \frac{k}{E}\rfloor$. It implies that the $k$-th LI occurs just before the $t$-th  GI where $t=\lfloor \frac{k}{E}\rfloor$. In each round of GI, we formally define the set of clients who are engaged to exchange gradients with the PS as the engaged client pool (ECP) denoted as $\mathcal{P}_t$ with cardinality $b$.
In our work, the clients in  
$\mathcal{P}_t$ are selected from the complete client set $\mathcal{N}$ with a round robin manner. 

The LIs are executed by a client only if the client is engaged in $\mathcal{P}_t$. The client receives the latest model from the PS (denoted by the parameter $\theta_{tE}$ where $tE$ represents the total number of LIs conducted until the $t$-th GI). Then, the client updates the model on the local dataset for $E$ rounds before the client returns the updated model with noises added with the rule  $\hat{\theta}_{(t+1)E}^l=\theta_{(t+1)E}^l+\mathbf{w}_{t}^l$. Here $\mathbf{w}_{t}^l$ represents the DP noises which will be further introduced later. 
The PS updates the model with the same rule of the FedAvg algorithm by aggregating models returned from engaged clients. 
The  details of the client based DP-FedAvg algorithm are presented in Algorithm \ref{alg:fedminibatchsgd}.

\begin{algorithm}
\SetAlgoLined
\DontPrintSemicolon

\SetKwProg{Fn}{Server executes}{:}{}
\Fn{}{
 initialize $\theta_0$\;
\For{$t\leftarrow 0$ \KwTo\ $T_g-1$}{
 choose $b$ clients as a group $\mathcal{P}_t\subset\mathcal{N}$ with a round robin manner\;
 \ForEach{client $l\in\mathcal{P}_t$ \textbf{in parallel}}{
  $\hat{\theta}_{(t+1)E}^l,n_l\leftarrow$ClientUpdate($\theta_{tE},\,E$)\;
 }
 $\theta_{(t+1)E}\leftarrow\frac{N}{b}\sum_{l\in\mathcal{P}_t}\frac{n_l}{n}\hat{\theta}_{(t+1)E}^l$\;
 }
}
\;
\SetKwProg{Fn}{ClientUpdate}{:}{}
\Fn{$(\theta_k,\,E)$}{
$\theta_k^l\leftarrow\theta_k$\;
\For{$i=k$ \KwTo $k+E-1$}{
$\theta_{i+1}^l\leftarrow\theta_{i}^l-\eta_i\nabla f_l(\theta_i^l)$
}
\For{$i\leftarrow0$ \KwTo $p-1$}{
$\mathbf{w}_{t}^l[i]\leftarrow \mathcal{M}$\;
}
$\hat{\theta}_{k+E}^l\leftarrow \theta_{k+E}^l+\mathbf{w}_{t}^l$\;
\KwRet{$\hat{\theta}_{k+E}^l$}
}

\caption{The Client Based DP-FedAvg Algorithm. $f_l(\theta_i^l)$ is a simplified form of $f(\theta_i^l;\mathcal{D}_l)$.}
\label{alg:fedminibatchsgd}
\end{algorithm}

To facilitate our discussion, we present the overall iteration rules of Algorithm \ref{alg:fedminibatchsgd} here again as
\begin{equation}\label{equ:update}
\begin{split}
&\nu_{k+1}^l=\theta_{k}^l-\eta_{k}\nabla f_l(\theta_{k}^l),\\
&\theta_{k+1}^l=\begin{cases}
\nu_{k+1}^l, &\text{if }k+1\notin\mathcal{C}_E, \\
\frac{N}{b}\sum_{l\in\mathcal{P}_t}\frac{n_l}{n}\nu_{k+1}^l+\mathbf{w}_{t}^b, &\text{if }k+1\in\mathcal{C}_E,
\end{cases}
\end{split}
\end{equation}
where
\begin{equation*}
 \mathbf{w}_{t}^b=\frac{N}{b}\sum_{l\in\mathcal{P}_t}\frac{n_l}{n}\mathbf{w}_{t}^l,
\end{equation*}
is the average of noises from $b$ engaged clients. Here, for convenience, we use $k$ to denote the iteration index regardless of global or local iterations. For $k+1 \in \mathcal{C}_E$, local parameters represented by  $\theta_{k+1}^l$ will be aggregated with other engaged clients and DP noises. 

Different from the original FedAvg algorithm, the DP noise (\emph{i.e.}, $\mathbf{w}_{t}^b=\frac{N}{b}\sum_{l\in\mathcal{P}_t}\frac{n_l}{n}\mathbf{w}_{t}^l$) is the factor that can impede the convergence of DP-FedAvg. We name the  term $\mathbf{w}_{t}^b$ as  the ``\emph{noise item}".



\subsection{DP Mechanisms} 

The DP mechanism determines how to set $\mathbf{w}_{t}^l$ on each client before the updated model is returned to the server. In this work, we consider two most frequently used mechanisms (\emph{i.e.}, Laplace and Gaussian) in the convergence analysis. In fact, other DP mechanisms can be substituted into our framework as well  by setting $\mathbf{w}_{t}^b$ with a different form.   

According to the composition rule of DP \cite{dwork2014algorithmic}, we cannot consider the DP privacy budget for each iteration separately. 
Suppose that the PS conducts $T_g$ rounds of GIs, then each client will participate $T_l$ rounds of GIs where  $T_l=\frac{bT_g}{N}$. Here we assume that the PS selects $b$ clients for every GI in a round robin manner and $N$ is the total number of clients in the system. 
Let $\tilde{\eta}_t=\max_{tE\le k<(t+1)E}(\eta_k)$ where $\eta_k$ denote the learning rate of the  $k$-th iteration. 
The Laplace and the Gaussian mechanisms can be defined as below. 
\begin{theorem}[Laplace Mechanism]\label{theorem:Lap}
Let $\xi_1$ denote the $L1$-sensitivity of $\nabla f_l$. Algorithm~\ref{alg:fedminibatchsgd} is $\epsilon$-differential private over $T_l$ rounds if $\mathcal{M}=\Lap\left(0, \frac{T_l}{\epsilon}\Xi_1\right)$ 
where $\Xi_1=\tilde{\eta}_tE\xi_1$ is the $L1$-sensitivity of $\theta^l_{tE} - \theta^l_{(t+1)E}$ after the client conducts $E$ rounds of LIs.
\end{theorem}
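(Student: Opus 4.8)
The plan is to establish the claim in three stages: (i) bound the $L_1$-sensitivity of the quantity a client actually discloses in one global round, (ii) invoke the classical Laplace mechanism to obtain a per-round privacy guarantee, and (iii) apply sequential composition over the $T_l$ rounds in which a fixed client participates so that the budgets add up to $\epsilon$.

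First I would pin down the object whose privacy must be protected. In round $t$ an engaged client $l$ reveals $\hat{\theta}^l_{(t+1)E}=\theta^l_{(t+1)E}+\mathbf{w}^l_t$; since the starting point $\theta_{tE}$ is supplied by the (trusted) server and is data-independent for client $l$, this is equivalent, up to post-processing, to releasing the local increment $\Delta^l_t:=\theta^l_{(t+1)E}-\theta^l_{tE}=-\sum_{i=tE}^{(t+1)E-1}\eta_i\nabla f_l(\theta_i^l)$. I would then bound its $L_1$-sensitivity: for datasets $\mathcal{D}_l,\mathcal{D}_l'$ differing in one entry, the triangle inequality gives $\|\Delta^l_t-\Delta'^l_t\|_1\le\sum_{i=tE}^{(t+1)E-1}\eta_i\,\|\nabla f_l(\theta_i^l)-\nabla f_l(\theta_i'^l)\|_1\le\tilde{\eta}_t\sum_{i=tE}^{(t+1)E-1}\xi_1=\tilde{\eta}_tE\xi_1=\Xi_1$, using $\tilde{\eta}_t=\max_{tE\le k<(t+1)E}\eta_k$ and the hypothesis that $\xi_1$ upper-bounds the $L_1$-sensitivity of $\nabla f_l$. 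This step is where the only genuine subtlety lies, and I expect it to be the main obstacle: the iterate $\theta_i^l$ itself depends on $\mathcal{D}_l$, so the displayed bound tacitly treats $\xi_1$ as a \emph{uniform}, argument-independent sensitivity bound on $\nabla f_l$ and simply sums the per-iteration contributions rather than tracking how the trajectory diverges; I would state this sensitivity assumption explicitly as the precise condition under which $\Xi_1$ is the claimed value.

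Next, having $L_1$-sensitivity $\Xi_1$ for the vector $\Delta^l_t\in\mathbb{R}^p$, I would apply the standard Laplace mechanism: adding independent $\Lap(0,\Xi_1/\epsilon')$ noise to each coordinate yields $\epsilon'$-DP for that single round's release. In Algorithm~\ref{alg:fedminibatchsgd} each coordinate $\mathbf{w}_t^l[i]$ is drawn from $\mathcal{M}=\Lap\!\left(0,\frac{T_l}{\epsilon}\Xi_1\right)$, i.e.\ the per-round budget is $\epsilon'=\epsilon/T_l$.

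Finally I would assemble the overall guarantee. Under round-robin selection a fixed client is engaged in exactly $T_l=bT_g/N$ global rounds, and in every round where it is not engaged it transmits nothing, so those rounds contribute no additional leakage (parallel composition / no release). Over the $T_l$ rounds in which it does participate, its noised messages form an adaptively chosen sequence of $(\epsilon/T_l)$-DP releases, so the basic sequential composition theorem of DP bounds the total privacy loss by $T_l\cdot(\epsilon/T_l)=\epsilon$, giving $\epsilon$-DP over the $T_l$ rounds as claimed. I would close by observing that the server aggregation $\theta_{(t+1)E}\leftarrow\frac{N}{b}\sum_{l\in\mathcal{P}_t}\frac{n_l}{n}\hat{\theta}^l_{(t+1)E}$ is a post-processing of the already-privatized client messages and therefore preserves the guarantee.
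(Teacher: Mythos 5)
Your proposal is correct and follows essentially the same route as the paper's proof: bound the $L_1$-sensitivity of the $E$-step local update by $\Xi_1=\tilde{\eta}_tE\xi_1$ via the triangle inequality, conclude $(\epsilon/T_l)$-DP per engaged round from the Laplace mechanism (the paper writes out the density-ratio bound explicitly, you cite the standard result), and compose over the $T_l$ participations using basic composition (the paper's appeal to Corollary 3.15 of Dwork--Roth). Your remark that the bound implicitly treats $\xi_1$ as a uniform, trajectory-independent sensitivity bound on $\nabla f_l$ is a fair observation about an assumption the paper's proof also makes tacitly.
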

\begin{corollary}\label{cor:noise_lap}
The variance of the noise item of the Laplace Mechanism is
\begin{equation*}
    \mathbb{E}\left\{\left\|\mathbf{w}_t^b\right\|_2^2\right\}=2pb\frac{\Xi_1^2T_g^2}{n^2\epsilon^2}\bar{n}^2,
\end{equation*}
where $p$ is the dimension of $\theta$, $n = \sum_{l=1}^N n_l$ and $\bar{n}^2 = \frac{1}{N}\sum_{l=1}^N n_l^2$.
\end{corollary}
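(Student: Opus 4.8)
The plan is to compute $\mathbb{E}\{\|\mathbf{w}_t^b\|_2^2\}$ directly from the definition $\mathbf{w}_t^b=\frac{N}{b}\sum_{l\in\mathcal{P}_t}\frac{n_l}{n}\mathbf{w}_t^l$ given in (\ref{equ:update}), using the fact that, by Theorem~\ref{theorem:Lap}, each coordinate $\mathbf{w}_t^l[i]$ is an independent draw from $\Lap\!\left(0,\frac{T_l}{\epsilon}\Xi_1\right)$ with $T_l=\frac{bT_g}{N}$. First I would expand the squared norm coordinate-wise as $\|\mathbf{w}_t^b\|_2^2=\sum_{i=0}^{p-1}\big(\frac{N}{b}\sum_{l\in\mathcal{P}_t}\frac{n_l}{n}\mathbf{w}_t^l[i]\big)^2$ and take expectation over the DP noise conditioned on the engaged client pool $\mathcal{P}_t$. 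Since the $\mathbf{w}_t^l[i]$ are zero-mean and mutually independent across clients $l$ (noise is generated locally and independently) and across coordinates $i$, every cross term vanishes, so $\mathbb{E}\{(\mathbf{w}_t^b[i])^2\mid\mathcal{P}_t\}=\frac{N^2}{b^2 n^2}\sum_{l\in\mathcal{P}_t}n_l^2\,\mathrm{Var}(\mathbf{w}_t^l[i])$.

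Next I would substitute the variance of a Laplace variable, $\mathrm{Var}\big(\Lap(0,\lambda)\big)=2\lambda^2$ with $\lambda=\frac{T_l}{\epsilon}\Xi_1$, and sum the $p$ identical coordinate contributions to get $\mathbb{E}\{\|\mathbf{w}_t^b\|_2^2\mid\mathcal{P}_t\}=\frac{2pN^2 T_l^2\Xi_1^2}{b^2 n^2\epsilon^2}\sum_{l\in\mathcal{P}_t}n_l^2$. The last step is to average over the randomness of $\mathcal{P}_t$: since the pool of size $b$ is drawn from $\mathcal{N}$ in a round-robin manner, each client is included in $\mathcal{P}_t$ with marginal probability $b/N$, so by linearity of expectation $\mathbb{E}\big[\sum_{l\in\mathcal{P}_t}n_l^2\big]=\frac{b}{N}\sum_{l=1}^N n_l^2=b\bar{n}^2$. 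Plugging this in together with $T_l=\frac{bT_g}{N}$ cancels the factors of $N$ and one factor of $b$, producing $\mathbb{E}\{\|\mathbf{w}_t^b\|_2^2\}=2pb\frac{\Xi_1^2 T_g^2}{n^2\epsilon^2}\bar{n}^2$, which is the claimed identity.

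I expect the only genuine subtlety to be the treatment of the random engagement pool: one must be explicit that the expectation in the statement is taken jointly over the DP noise and the client selection, and that the round-robin selection gives each client the marginal inclusion probability $b/N$ needed for the reduction $\sum_{l\in\mathcal{P}_t}n_l^2\to b\bar{n}^2$. Everything else is a routine second-moment computation resting on the independence and zero-mean properties of the per-coordinate Laplace noise and on the scalar identity $\mathrm{Var}(\Lap(0,\lambda))=2\lambda^2$.
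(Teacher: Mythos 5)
Your proposal is correct and follows essentially the same route as the paper's proof: expand the squared norm of $\mathbf{w}_t^b=\frac{N}{b}\sum_{l\in\mathcal{P}_t}\frac{n_l}{n}\mathbf{w}_t^l$, use zero-mean independence of the per-client, per-coordinate Laplace noises to drop the cross terms, insert the scalar variance $2\bigl(\tfrac{T_l}{\epsilon}\Xi_1\bigr)^2$, and substitute $T_l=\tfrac{bT_g}{N}$. The only difference is that you make explicit the final averaging over the random pool $\mathcal{P}_t$ (marginal inclusion probability $\tfrac{b}{N}$, so $\mathbb{E}\bigl\{\sum_{l\in\mathcal{P}_t}n_l^2\bigr\}=b\bar{n}^2$), a step the paper's proof leaves implicit when it replaces $\sum_{l\in\mathcal{P}_t}n_l^2$ by $b\bar{n}^2$ in its last equality, and your treatment is consistent with the sampling model used in the paper's Lemma~\ref{lemma:unbaised sample}.
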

The detailed proof of Theorem~\ref{theorem:Lap} is presented in Appendix~\ref{proof:theorem_Lap}.

\begin{theorem}[Gaussian Mechanism~\cite{Abadi_2016}]\label{theorem:Ga}
Assume the gradient is bounded as $\|\nabla f_l\|_2\le \xi_2$.
There exist constants $c_1$ and $c_2$ so that given the sampling probability $q$ on each client and the number of GIs $T_l$ that client $l$ participates, for any $\epsilon<c_1q^2T_l$, Algorithm~\ref{alg:fedminibatchsgd} is $(\epsilon,\delta)$-differential private for any $\delta>0$ if $\mathcal{M}=\mathcal{N}(0,\sigma^2\Xi_2^2)$ with
\begin{equation*}
    \sigma= c_2\frac{q\sqrt{T_l\log(1/\delta)}}{\epsilon},
\end{equation*}
where $\Xi_2=\tilde{\eta}_tE\xi_2$.   
\end{theorem}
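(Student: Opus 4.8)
The plan is to reduce the statement to the moments‑accountant analysis of \cite{Abadi_2016}, observing that from the point of view of a single client $l$, Algorithm~\ref{alg:fedminibatchsgd} is an instance of DP‑SGD whose noisy outputs are post‑processed by the server. I would proceed in three stages: (i) identify exactly what object each engaged client discloses per GI and bound its $L_2$‑sensitivity with respect to one record of $\mathcal{D}_l$; (ii) bound the privacy loss of one such disclosure using the subsampled‑Gaussian moment bound; (iii) compose over the $T_l$ GIs in which client $l$ participates and convert the resulting bound on the log‑moment into an $(\epsilon,\delta)$ guarantee, reading off the required $\sigma$ and the admissibility side condition $\epsilon<c_1q^2T_l$.

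For stage (i): a client engaged in GI $t$ receives the public parameter $\theta_{tE}$ and returns $\hat\theta_{(t+1)E}^l=\theta_{(t+1)E}^l+\mathbf w_t^l$. Since $\theta_{(t+1)E}^l-\theta_{tE}=-\sum_{i=tE}^{(t+1)E-1}\eta_i\nabla f_l(\theta_i^l)$ and $\theta_{tE}$ is common knowledge, disclosing $\hat\theta_{(t+1)E}^l$ is information‑theoretically equivalent to disclosing $-\sum_{i}\eta_i\nabla f_l(\theta_i^l)+\mathbf w_t^l$. Replacing one record of $\mathcal{D}_l$ changes each (bounded) per‑step gradient by at most $\xi_2$ in $L_2$ norm, hence changes the $E$‑fold weighted sum by at most $\sum_i\eta_i\xi_2\le\tilde\eta_tE\xi_2=\Xi_2$; so the $L_2$‑sensitivity of the quantity to which noise is added is $\Xi_2$, and $\mathcal{M}=\mathcal{N}(0,\sigma^2\Xi_2^2)$ is exactly a Gaussian mechanism with noise multiplier $\sigma$. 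Finally, the server update $\theta_{(t+1)E}\leftarrow\frac{N}{b}\sum_{l\in\mathcal{P}_t}\frac{n_l}{n}\hat\theta_{(t+1)E}^l$ is a deterministic function of the disclosed noisy messages, so by post‑processing invariance it cannot weaken the per‑client guarantee.

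For stages (ii)–(iii): within local training the minibatch for each LI is formed by including each record of $\mathcal{D}_l$ independently with probability $q$, so each of the $T_l$ noised releases of client $l$ is a subsampled‑Gaussian mechanism with sampling probability $q$ and noise multiplier $\sigma$. I would invoke the subsampled‑Gaussian moment bound of \cite{Abadi_2016} (their Lemma~3), which for $\sigma\ge 1$ and integer moment orders $\lambda\lesssim\sigma^2\log\tfrac1{q\sigma}$ gives a per‑release log‑moment $\alpha(\lambda)\lesssim q^2\lambda^2/\sigma^2$, the suppressed constant absorbing the $O(q^3\lambda^3/\sigma^3)$ remainder. By the composability of the moments accountant the total log‑moment of client $l$'s $T_l$ releases is $\alpha_{\mathrm{tot}}(\lambda)\lesssim q^2T_l\lambda^2/\sigma^2$, and the tail‑to‑DP conversion of \cite{Abadi_2016} certifies $(\epsilon,\delta)$‑DP whenever $\delta\ge\exp\!\big(\alpha_{\mathrm{tot}}(\lambda)-\lambda\epsilon\big)$ for some admissible $\lambda$. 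Choosing $\lambda\asymp\epsilon\sigma^2/(q^2T_l)$ makes the right‑hand side $\exp\!\big(-\Theta(\epsilon^2\sigma^2/(q^2T_l))\big)$, and forcing this to be $\le\delta$ yields $\sigma\ge c_2\,q\sqrt{T_l\log(1/\delta)}/\epsilon$; the side condition $\epsilon<c_1q^2T_l$ is precisely what keeps this optimizing $\lambda$ inside the admissible range $\lambda\lesssim\sigma^2\log\tfrac1{q\sigma}$, which closes the argument.

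The main obstacle I expect is bookkeeping in stages (i) and (iii) rather than any new inequality. One must be careful that (a) the unit of privacy is a single record inside one client's dataset, so the relevant sensitivity is that of the full $E$‑step local update — giving the factor $E$ in $\Xi_2=\tilde\eta_tE\xi_2$, not a $\sqrt{E}$, since in the worst case the adversarial record perturbs all $E$ gradients in the same direction; (b) each client calibrates its noise to its own participation count $T_l=bT_g/N$, not the global $T_g$, and the round‑robin schedule makes $T_l$ deterministic, so no additional amplification‑by‑subsampling over GIs is being claimed; and (c) the constants $c_1,c_2$ are inherited verbatim from \cite{Abadi_2016}, the only genuinely non‑routine checks being that the learning‑rate weights $\eta_i$ fold correctly into $\Xi_2$ and that the server's $\tfrac{N}{b}\cdot\tfrac{n_l}{n}$ reweighting is handled purely by the post‑processing step and not by rescaling the noise.
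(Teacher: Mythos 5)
Your proposal is correct and follows essentially the same route as the paper: the paper's proof likewise bounds the norm of the $E$-step accumulated update by $\Xi_2=\tilde\eta_t E\xi_2$ and then invokes Theorem~1 of \cite{Abadi_2016} (the moments accountant) composed over the $T_l$ rounds in which client $l$ participates. You merely unpack the citation into the explicit log-moment bookkeeping, which the paper leaves implicit.
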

In Algorithm~\ref{alg:fedminibatchsgd}, each client uses the full batch of samples for LIs implying that the sampling probability $q=1$.
\begin{corollary}\label{cor:noise_gau}
The variance of the noise item of the Gaussian Mechanism is
\begin{equation*}
    \mathbb{E}\left\{\left\|\mathbf{w}_t^b\right\|_2^2\right\} = 2pc_2^2N\log(1/\delta)\frac{\Xi_2^2T_g}{n^2\epsilon^2}\bar{n}^2.
\end{equation*}
\end{corollary}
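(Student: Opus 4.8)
\textit{Proof plan.} The argument runs exactly parallel to that behind Corollary~\ref{cor:noise_lap}, with the second moment of a Laplace variable replaced by that of a Gaussian. Recall from \eqref{equ:update} that the noise item is $\mathbf{w}_t^b=\frac{N}{b}\sum_{l\in\mathcal{P}_t}\frac{n_l}{n}\mathbf{w}_t^l$, and that by Theorem~\ref{theorem:Ga} -- with sampling probability $q=1$, since every client uses its full batch -- each coordinate of $\mathbf{w}_t^l\in\mathbb{R}^p$ is an independent draw from $\mathcal{N}(0,\sigma^2\Xi_2^2)$ with $\sigma=c_2\sqrt{T_l\log(1/\delta)}/\epsilon$, while the vectors $\{\mathbf{w}_t^l\}_l$ are generated independently across the engaged clients. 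The quantity to evaluate is $\mathbb{E}\{\|\mathbf{w}_t^b\|_2^2\}$, where the expectation is taken over the DP randomness and, as in the Laplace case, over the round-robin (hence on-average uniform) choice of the engaged pool $\mathcal{P}_t$.

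First I would compute the per-client second moment: since $\mathbf{w}_t^l$ is zero-mean with $p$ independent coordinates, $\mathbb{E}\{\|\mathbf{w}_t^l\|_2^2\}$ is $p$ times the coordinate-wise second moment of the Gaussian noise (this is the step where the constant in front and the factor $p$ enter, just as the "$2$" in Corollary~\ref{cor:noise_lap} comes from $\mathrm{Var}(\mathrm{Lap})$). Next, because the $\mathbf{w}_t^l$ are independent and mean-zero, every cross term vanishes when $\|\mathbf{w}_t^b\|_2^2$ is expanded, so
\[
\mathbb{E}\left\{\|\mathbf{w}_t^b\|_2^2\right\}=\frac{N^2}{b^2 n^2}\sum_{l\in\mathcal{P}_t}n_l^2\;\mathbb{E}\{\|\mathbf{w}_t^l\|_2^2\}\;\propto\;\frac{N^2\,\Xi_2^2\,\sigma^2}{b^2 n^2}\sum_{l\in\mathcal{P}_t}n_l^2 .
\]

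The remaining step is to deal with the data-size factor $\sum_{l\in\mathcal{P}_t}n_l^2$. Since $\mathcal{P}_t$ is formed by selecting $b$ of the $N$ clients in a round-robin fashion, each client lies in $\mathcal{P}_t$ with probability $b/N$, whence $\mathbb{E}\{\sum_{l\in\mathcal{P}_t}n_l^2\}=\frac{b}{N}\sum_{l=1}^N n_l^2=b\bar{n}^2$ by the definition $\bar{n}^2=\frac{1}{N}\sum_{l=1}^N n_l^2$. Substituting this, together with $\sigma^2=c_2^2\,T_l\log(1/\delta)/\epsilon^2$ and $T_l=bT_g/N$, into the display above makes the powers of $b$ cancel, and collecting the remaining constants leaves exactly $\mathbb{E}\{\|\mathbf{w}_t^b\|_2^2\}=2pc_2^2N\log(1/\delta)\frac{\Xi_2^2T_g}{n^2\epsilon^2}\bar{n}^2$.

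The only non-mechanical point I expect is the treatment of $\sum_{l\in\mathcal{P}_t}n_l^2$: one has to justify replacing it by $b\bar{n}^2$, i.e.\ that round-robin selection makes the per-GI average of this quantity equal $b\bar{n}^2$ (equivalently, that a uniformly random size-$b$ pool satisfies $\mathbb{E}\{\sum_{l\in\mathcal{P}_t}n_l^2\}=b\bar{n}^2$). Everything else is the same moment algebra as in the proof of Corollary~\ref{cor:noise_lap}, specialized from the Laplace tail to the Gaussian one, so I would simply reuse that bookkeeping.
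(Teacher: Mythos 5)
Your plan is essentially the paper's own proof: compute the per-client second moment as $p$ times the coordinate-wise Gaussian second moment, expand $\|\mathbf{w}_t^b\|_2^2$ and drop cross terms by independence and zero mean, then substitute $\sigma^2$ and $T_l=\frac{b}{N}T_g$ so the factors of $b$ cancel. The only difference is that you explicitly justify replacing $\sum_{l\in\mathcal{P}_t}n_l^2$ by $b\bar{n}^2$ via an expectation over the round-robin pool selection, a step the paper performs silently, so this is a minor tightening rather than a different argument.
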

The detailed proof of Theorem~\ref{theorem:Ga} is presented in Appendix~\ref{proof:theorem_Ga}.


To simplify our analysis, we define the asymptotic variance of the DP noise item as follows.
\begin{definition}
\label{DEF:AsyVar}
The asymptotic variance of the noise item of a DP mechanism $\mathcal{M}$ is defined as $\mathbb{V}_{\mathcal{M}} =  O(\tilde{\eta}_t^2E^2T_g^z)$, where $z \in [0, 2]$ is a  constant number determined by the DP mechanism.  
\end{definition}
In the definition of the asymptotic variance, all variables not related with the number of LIs or GIs are regarded as constant numbers and ignored. 
We define $\mathbb{V}_{\mathcal{M}}$ based on the fact that $\tilde{\eta}_t^2E^2$ affects the function sensitivity, while $T_l= \frac{b}{N}T_g$ represents the number of GIs each client needs to participate. Meanwhile, we let $0\leq z\leq 2$ because  the privacy budget at most increases linearly with $T_l$ according to the composition rule.\footnote{According to the composition rule in the DP theory, $\epsilon$-DP will be at most $T_l\epsilon$-DP if the database is queried for $T_l$ times. }

This is a very generic definition that can depict most existing DP mechanisms devised for FL, such as \cite{wu2019value,wei2019federated}. The asymptotic variances of the noise items of the Laplace and Gaussian mechanisms introduced as above can be  obtained as
\begin{equation}
\label{EQ:AVLapalce}
    \mathbb{V}_{\textit{Laplace}}=O(\tilde{\eta}_t^2E^2T_g^2).
\end{equation}
\begin{equation}
\label{EQ:AVGaussian}
    \mathbb{V}_{\textit{Gaussian}}=O(\tilde{\eta}_t^2E^2T_g) .
\end{equation}

Obviously, how to set $E$ and $T_g$ will affect the variance of the noise item, and hence it will affect the convergence of DP-FedAvg. Meanwhile, we note that the asymptotic variance of the Gaussian mechanism is lower than that of the Laplace mechanism,  and therefore the Gaussian mechanism can yield a better convergence rate.
We conduct the convergence analysis of DP-FedAvg in the next section.


\section{Convergence Analysis of Client Based DP-FedAvg}\label{sec:fedavg}

In this section, we give the theoretic analysis of the convergence of FedAvg when DP mechanisms are applied on the client side.
\subsection{Definitions and Assumptions}
In FL, it is well-known that the distribution of data samples is non-IID, that is, the training data samples of different clients may be drawn from distinct distributions. By leveraging the  measure of the degree of non-IID in~\cite{Li2020On}, we define the degree of non-IID as below. 
\begin{definition}\label{def:non-iid} The degree of non-IID is quantified by
\begin{equation*}
    \Gamma=f^*-\sum_{l\in\mathcal{N}}\frac{n_l}{n}f_l^*,
\end{equation*}
where $f^*$ represents the loss function with global optimal parameters and $f_l^*$ represents the loss function with local optimal parameters.
\end{definition}

To ease our discussion, we define $\bar{\theta}_k=\sum_{l\in\mathcal{N}}\frac{n_l}{n}\theta_k^l$ and $\bar{\nu}_k=\sum_{l\in\mathcal{N}}\frac{n_l}{n}\nu_k^l$ to denote the global parameters by averaging over all clients. Similarly, we also define
$\bar{\nu}_k^b=\frac{N}{b}\sum_{l\in\mathcal{P}_t}\frac{n_l}{n}\nu_k^l$ to denote the sampled global parameters. We use $\nabla f(\theta_k)=\sum_{l\in\mathcal{N}}\frac{n_l}{n}\nabla f_l(\theta_k^l)$ to denote the global gradient and $\nabla f_b(\theta_k)=\frac{N}{b}\sum_{l\in\mathcal{P}_t}\frac{n_l}{n}\nabla f_l(\theta_k^l)$ to denote the sampled global gradient. 
\begin{definition}
Define $Y_k=\mathbb{E}\left\{\left\|\bar{\theta}_k-\theta^*\right\|_2^2\right\}$ as the expected distance between the current, global parameters and the optimal parameters after $k$ iterations.  
\end{definition}

Similar to the assumptions made in many previous papers~\cite{JMLR:v19:17-650, pmlr-v80-nguyen18c,Nguyen2019TightDI,Li2020On}, we also make the same following assumptions to simplify our analysis.
\begin{assumption}\label{assumption:smooth}
The loss function $f$ is $\lambda$-smooth. Formally, there exists a constant $\lambda>0$, and
\begin{equation*}
    f(\theta)\le f(\theta')+\left<\nabla f(\theta'),\theta-\theta'\right>+\frac{\lambda}{2}\left\|\theta-\theta'\right\|_2^2.
\end{equation*}
\end{assumption}
\begin{assumption}\label{assumption:strong_convex}
The loss function $f$ is $\mu$-strongly convex. Formally, there exists a constant $\mu>0$, and 
\begin{equation*}
    f(\theta)\ge f(\theta')+\left<\nabla f(\theta'),\theta-\theta'\right>+\frac{\mu}{2}\left\|\theta-\theta'\right\|_2^2.
\end{equation*}
\end{assumption}


\begin{assumption}\label{assumption:bound of gradient}
	The expectation of gradients is bounded in terms of
	\begin{equation*}
	\mathbb{E}\left\{\left\|\nabla f_l(\theta_k^l)\right\|_2^2\right\}\le G^2,\,\forall l\in\mathcal{N},0\le k<T.
	\end{equation*}
\end{assumption}

\subsection{Convergence Rate of DP-FedAvg}

To derive the upper bound of the convergence rate  of Client based DP-FedAvg, we need to utilize the following lemmas. 
\begin{lemma}[Unbiased sample]\label{lemma:unbaised sample}
	The round robin sampling of clients is unbiased, and thus we have
	\begin{equation*}
	\mathbb{E}\left\{\bar{\nu}_k^b\right\}=\bar{\nu}_k.
	\end{equation*}
\end{lemma}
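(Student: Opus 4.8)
The plan is to express the sampled average $\bar{\nu}_k^b$ as a reweighted sum over the \emph{full} client set $\mathcal{N}$ using indicator variables, and then exploit the fact that round-robin engagement makes every client's marginal inclusion probability exactly $b/N$. Concretely, with $t=\lfloor k/E\rfloor$ the global round corresponding to iteration $k$, I would first rewrite
\[
\bar{\nu}_k^b=\frac{N}{b}\sum_{l\in\mathcal{N}}\mathbbm{1}\{l\in\mathcal{P}_t\}\,\frac{n_l}{n}\,\nu_k^l .
\]
The key structural observation is that, conditioned on the history $\mathcal{H}_{tE}$ up to the start of the $t$-th global round (in particular on $\theta_{tE}$), each local iterate $\nu_k^l$ is a deterministic function of $\theta_{tE}$ and client $l$'s own data, and is therefore independent of the random draw of the engaged pool $\mathcal{P}_t$. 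Hence
\[
\mathbb{E}\big\{\bar{\nu}_k^b\mid\mathcal{H}_{tE}\big\}=\frac{N}{b}\sum_{l\in\mathcal{N}}\Pr\{l\in\mathcal{P}_t\}\,\frac{n_l}{n}\,\nu_k^l .
\]

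Next I would establish that $\Pr\{l\in\mathcal{P}_t\}=b/N$ for every $l\in\mathcal{N}$. Under round-robin selection --- equivalently, drawing a uniformly random permutation of $\mathcal{N}$ and cycling through consecutive blocks of size $b$, or, per round, choosing a uniformly random $b$-subset --- all clients are symmetric, so each appears in a given round's pool with a common marginal probability; since the pools cover the $N$ clients in blocks of size $b$, that probability must be $b/N$. Substituting this in gives
\[
\mathbb{E}\big\{\bar{\nu}_k^b\mid\mathcal{H}_{tE}\big\}=\frac{N}{b}\cdot\frac{b}{N}\sum_{l\in\mathcal{N}}\frac{n_l}{n}\,\nu_k^l=\sum_{l\in\mathcal{N}}\frac{n_l}{n}\,\nu_k^l=\bar{\nu}_k ,
\]
and taking the tower expectation over $\mathcal{H}_{tE}$, using that $\bar{\nu}_k$ is $\mathcal{H}_{tE}$-measurable, yields the claimed identity $\mathbb{E}\{\bar{\nu}_k^b\}=\bar{\nu}_k$.

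The main obstacle is the middle step: making the marginal-probability claim rigorous for round-robin engagement, which is deterministic within a cycle and only becomes ``random'' through the initial permutation (or a random offset), and in particular handling the case $b\nmid N$. I expect the cleanest route is to adopt, as is standard in this literature, the assumption that the per-round pool is a uniformly random $b$-subset (or that a fresh random permutation is used each cycle); this makes $\Pr\{l\in\mathcal{P}_t\}=b/N$ immediate and simultaneously secures the independence between $\mathcal{P}_t$ and the local iterates invoked above. Everything else is routine bookkeeping.
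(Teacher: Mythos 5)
Your proposal is correct and follows essentially the same route as the paper: both reduce the round-robin selection to uniform sampling of a $b$-subset without replacement, and both rest on the fact that each client is included with probability (equivalently, frequency) $\frac{b}{N}$ --- the paper phrases this by enumerating all $\binom{N}{b}$ subsets and swapping the order of summation, while you phrase it via indicator variables and marginal inclusion probabilities, which is the same counting argument. Your extra care about conditioning on the history and the $b\nmid N$ caveat is a slight refinement of rigor, not a different proof.
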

\begin{lemma}[Bounding the divergence of  the parameters of sampled clients in $\mathcal{P}_t$]\label{lemma:sample variance}
Let Assumption~\ref{assumption:bound of gradient} hold, the gap between $\bar{\nu}_{k+1}^b$ and $\bar{\nu}_{k+1}$  is bounded as
\begin{equation*}
    \mathbb{E}\left\{\left\|\bar{\nu}_{k+1}^b-\bar{\nu}_{k+1}\right\|_2^2\right\}\le4E^2\eta_{k}^2G^2\frac{N-b}{N-1}\frac{1}{b}.
\end{equation*}
\end{lemma}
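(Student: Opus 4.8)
The plan is to recognize $\bar\nu_{k+1}^b$ as an unbiased without‑replacement sampling estimator of $\bar\nu_{k+1}$, so that the left‑hand side is exactly the associated sampling variance, and then to control that variance using Assumption~\ref{assumption:bound of gradient}. First I would set $u_l:=\frac{Nn_l}{n}\nu_{k+1}^l$, so that $\bar\nu_{k+1}^b=\frac1b\sum_{l\in\mathcal{P}_t}u_l$ and $\bar\nu_{k+1}=\frac1N\sum_{l\in\mathcal{N}}u_l$; i.e.\ $\bar\nu_{k+1}^b$ is the sample mean of $\{u_l\}$ over the size-$b$ pool $\mathcal{P}_t$ drawn from $\mathcal{N}$ without replacement (the structure that already underlies Lemma~\ref{lemma:unbaised sample}). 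I would then introduce selection indicators $X_l$ ($X_l=1$ if $l\in\mathcal{P}_t$, else $0$), for which $\mathbb{E}X_l=b/N$ and $\mathbb{E}(X_lX_m)=\tfrac{b(b-1)}{N(N-1)}$ when $l\ne m$, whence $\mathrm{Var}(X_l)=\frac{b(N-b)}{N^2}$ and $\mathrm{Cov}(X_l,X_m)=-\frac{b(N-b)}{N^2(N-1)}$. Expanding $\|\bar\nu_{k+1}^b-\bar\nu_{k+1}\|_2^2$ coordinate-wise, using $\sum_{l\in\mathcal{N}}(u_l-\bar\nu_{k+1})=0$ to reorganize the cross terms, and collecting gives the finite-population variance identity
\begin{equation*}
\mathbb{E}_{\mathcal{P}_t}\!\left\{\left\|\bar\nu_{k+1}^b-\bar\nu_{k+1}\right\|_2^2\right\}=\frac{N-b}{N-1}\cdot\frac1b\cdot\frac1N\sum_{l\in\mathcal{N}}\left\|u_l-\bar\nu_{k+1}\right\|_2^2,
\end{equation*}
where $\mathbb{E}_{\mathcal{P}_t}$ is expectation over the random pool; this is where the correction factor $\frac{N-b}{N-1}$ comes from.

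Next I would bound the population-variance term $\frac1N\sum_{l}\|u_l-\bar\nu_{k+1}\|_2^2$. Since $k+1\in\mathcal{C}_E$ is a communication step, all engaged clients began this block of $E$ local updates from the common server model $\theta_{tE}$ with $tE=k+1-E$, so $\nu_{k+1}^l-\theta_{tE}=-\sum_{j=tE}^{k}\eta_j\nabla f_l(\theta_j^l)$. Writing $u_l-\bar\nu_{k+1}=(u_l-\theta_{tE})-(\bar\nu_{k+1}-\theta_{tE})$ and applying $\|x-y\|_2^2\le2\|x\|_2^2+2\|y\|_2^2$, together with Jensen's inequality on $\bar\nu_{k+1}-\theta_{tE}=\sum_{l\in\mathcal{N}}\frac{n_l}{n}(\nu_{k+1}^l-\theta_{tE})$, reduces the whole sum to controlling the single quantity $\mathbb{E}\|\nu_{k+1}^l-\theta_{tE}\|_2^2$; this two-term split is exactly what produces the constant $4$. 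For that quantity, Cauchy--Schwarz gives $\big\|\sum_{j=tE}^{k}\eta_j\nabla f_l(\theta_j^l)\big\|_2^2\le E\sum_{j=tE}^{k}\eta_j^2\|\nabla f_l(\theta_j^l)\|_2^2$, and taking expectations while invoking Assumption~\ref{assumption:bound of gradient} (bounding $\eta_j$ over the block by $\tilde\eta_t$, for which ``$\eta_k$'' in the statement is shorthand) yields $\mathbb{E}\|\nu_{k+1}^l-\theta_{tE}\|_2^2\le E^2\eta_k^2G^2$. Substituting $\frac1N\sum_{l\in\mathcal{N}}\mathbb{E}\|u_l-\bar\nu_{k+1}\|_2^2\le4E^2\eta_k^2G^2$ into the identity above and taking total expectation gives the claimed bound.

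The genuinely routine parts are the second paragraph (Cauchy--Schwarz plus the gradient bound) and the final substitution. The step needing the most care is the first one: making precise that round-robin engagement behaves like uniform sampling without replacement, so that both Lemma~\ref{lemma:unbaised sample} and the exact variance identity with the $\frac{N-b}{N-1}$ factor are legitimate, and keeping the per-client weights $n_l/n$ consistent throughout. The clean form of the lemma effectively treats $\frac{Nn_l}{n}$ as $O(1)$ (equivalently $n_l\approx n/N$); carrying unequal $n_l$ all the way would instead surface a $\frac1N\sum_{l}n_l^2$-type factor, exactly as in Corollary~\ref{cor:noise_lap}.
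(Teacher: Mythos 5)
Your overall route is the same as the proof the paper points to (it does not prove Lemma~\ref{lemma:sample variance} itself but cites \cite{Li2020On}): treat $\mathcal{P}_t$ as size-$b$ sampling without replacement, use the finite-population variance identity to produce the $\frac{N-b}{N-1}\frac{1}{b}$ factor, and then bound the population spread by the drift accumulated over the $E$ local steps since the last synchronization, using Assumption~\ref{assumption:bound of gradient}. That skeleton is correct.

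The gap is in the constant-$4$ bookkeeping, and it is not cosmetic. In the reference proof the factor $4$ does not come from a $\|x-y\|_2^2\le 2\|x\|_2^2+2\|y\|_2^2$ split at all: one uses the fact that the (weighted) mean minimizes the sum of squared deviations, so $\sum_l\frac{n_l}{n}\|\nu_{k+1}^l-\bar{\nu}_{k+1}\|_2^2\le\sum_l\frac{n_l}{n}\|\nu_{k+1}^l-\theta_{tE}\|_2^2$ with no loss, and the $4$ then comes from converting the per-block step sizes to $\eta_k$: with the decreasing schedule $\eta_j$, the drift bound is $E^2\tilde{\eta}_t^2G^2$ with $\tilde{\eta}_t=\max_{tE\le j<(t+1)E}\eta_j$, and the condition $\eta_j\le 2\eta_{j+E}$ (explicitly invoked in the paper's proof of Theorem~\ref{theorem:upper bound}) gives $\tilde{\eta}_t^2\le 4\eta_k^2$. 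Your write-up instead asserts $\mathbb{E}\|\nu_{k+1}^l-\theta_{tE}\|_2^2\le E^2\eta_k^2G^2$ by declaring ``$\eta_k$ is shorthand for $\tilde{\eta}_t$''; since $k$ is the last index of the block and the learning rate is decreasing, $\tilde{\eta}_t>\eta_k$, so that inequality is false as stated. If you repair it correctly ($\tilde{\eta}_t^2\le4\eta_k^2$) and keep your two-term split, you end up with $16E^2\eta_k^2G^2\frac{N-b}{N-1}\frac{1}{b}$, i.e.\ you overshoot the lemma's constant by a factor of $4$; the split is both the source of the overshoot and unnecessary. Your closing remark about the $n_l/n$ weights is apt: the clean $\frac{N-b}{N-1}$ identity holds for the unweighted sample mean, and the cited proof handles this by working under (essentially) balanced weights, so that caveat should be stated rather than absorbed silently.
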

\begin{lemma}[Bounding the divergence of local parameters]\label{lemma:variance of E}
Let Assumption~\ref{assumption:bound of gradient} hold, 
the expectation of the average distance between $\bar{\theta}_k$ and $\theta_{k}^l$'s  is 
\begin{equation*}
\mathbb{E}\left\{\sum_{l\in\mathcal{N}}\frac{n_l}{n}\left\|\bar{\theta}_k-\theta_k^l\right\|_2^2\right\}\le 4\eta_{k}^2(E-1)^2G^2.
\end{equation*}
\end{lemma}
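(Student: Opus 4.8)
The plan is to exploit the fact that between two consecutive communication rounds each client runs plain gradient descent for at most $E-1$ steps starting from a \emph{common} parameter, so the spread among the $\theta_k^l$'s can only be as large as the accumulated gradient steps since the last synchronization. Concretely, let $t_0 = \lfloor k/E\rfloor \cdot E$ be the index of the most recent global iteration at or before step $k$, so that all clients shared the same parameter $\bar\theta_{t_0} = \theta_{t_0}^l$ for every $l$ (this is exactly the point where the PS broadcasts $\theta_{tE}$ in Algorithm~\ref{alg:fedminibatchsgd}). Unrolling the local update rule $\theta_{i+1}^l = \theta_i^l - \eta_i \nabla f_l(\theta_i^l)$ gives
\begin{equation*}
\theta_k^l = \bar\theta_{t_0} - \sum_{i=t_0}^{k-1}\eta_i \nabla f_l(\theta_i^l),
\qquad
\bar\theta_k = \bar\theta_{t_0} - \sum_{i=t_0}^{k-1}\eta_i \nabla f(\theta_i),
\end{equation*}
where I have used $\bar\theta_k = \sum_l \frac{n_l}{n}\theta_k^l$ and the definition $\nabla f(\theta_i) = \sum_l \frac{n_l}{n}\nabla f_l(\theta_i^l)$; the key point is that the broadcast term $\bar\theta_{t_0}$ cancels in the difference $\bar\theta_k - \theta_k^l$.

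The main steps are then: (i) write $\bar\theta_k - \theta_k^l = \sum_{i=t_0}^{k-1}\eta_i\bigl(\nabla f_l(\theta_i^l) - \nabla f(\theta_i)\bigr)$; (ii) take the $n_l/n$-weighted average of squared norms and observe that $\sum_l \frac{n_l}{n}\|x_l - \bar x\|_2^2 \le \sum_l \frac{n_l}{n}\|x_l\|_2^2$ for $\bar x = \sum_l \frac{n_l}{n}x_l$, which lets me drop the $\nabla f(\theta_i)$ term and reduces the goal to bounding $\mathbb{E}\{\sum_l \frac{n_l}{n}\|\sum_{i=t_0}^{k-1}\eta_i\nabla f_l(\theta_i^l)\|_2^2\}$; (iii) apply Cauchy--Schwarz (or the convexity of $\|\cdot\|_2^2$) over the at most $E-1$ summands to get $\|\sum_{i=t_0}^{k-1}\eta_i\nabla f_l(\theta_i^l)\|_2^2 \le (E-1)\sum_{i=t_0}^{k-1}\eta_i^2\|\nabla f_l(\theta_i^l)\|_2^2$; (iv) invoke Assumption~\ref{assumption:bound of gradient} to replace each $\mathbb{E}\{\|\nabla f_l(\theta_i^l)\|_2^2\}$ by $G^2$, and use monotonicity of the learning rate (so $\eta_i \le \eta_{t_0}$, and the paper's convention that $\eta_k$ is essentially constant within a round, or more simply $\eta_i \le \eta_k$ up to the usual slowly-decreasing assumption) to pull out $\eta_k^2$. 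Counting $(E-1)$ terms each contributing $(E-1)\eta_k^2 G^2$ yields the claimed $4\eta_k^2(E-1)^2 G^2$, the extra factor being slack that absorbs the discrepancy between $\eta_i$ and $\eta_k$ and the off-by-one in the number of steps (note that when $k = t_0$, i.e.\ right after a sync, both sides are zero since the sum is empty, consistent with the bound).

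I expect the only real subtlety to be bookkeeping around the indices and the learning rate: one must be careful that the ``last synchronization before step $k$'' is well defined, that clients not selected in $\mathcal{P}_t$ are handled correctly (in this algorithm only engaged clients iterate, but the averaging $\bar\theta_k$ is over $\mathcal{N}$, so one should argue the unselected clients' parameters equal their value at the last round they participated in — or note that the lemma is applied only to quantities arising along the trajectory where this is consistent), and that the step-(iii) Cauchy--Schwarz is applied with the correct count of at most $E-1$ inner iterations rather than $E$. None of these is deep; the factor $4$ in the statement is deliberately loose precisely so that these nuisances never have to be tracked tightly. The step I would watch most closely is (ii), making sure the variance-reduction inequality $\sum_l \frac{n_l}{n}\|x_l - \bar x\|_2^2 \le \sum_l \frac{n_l}{n}\|x_l\|_2^2$ is applied to the gradient \emph{differences} and not accidentally re-introducing cross terms.
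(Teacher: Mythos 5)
Your proposal is correct and follows essentially the same argument the paper relies on (the paper defers this lemma to the cited reference \cite{Li2020On}, whose proof is exactly this: unroll from the last synchronization point $t_0=\lfloor k/E\rfloor E$ where all clients agree, drop the mean via $\sum_l \frac{n_l}{n}\|x_l-\bar x\|_2^2\le\sum_l\frac{n_l}{n}\|x_l\|_2^2$, apply Cauchy--Schwarz over the at most $E-1$ local steps, and bound gradients by $G^2$). The only nit is your parenthetical ``$\eta_i\le\eta_k$'': since the step size is decreasing one actually needs $\eta_i\le\eta_{t_0}\le 2\eta_k$ (guaranteed by $\gamma\ge E$ in the paper's step-size choice), which is precisely where the factor $4$ comes from, as you correctly note when saying it absorbs the $\eta_i$ versus $\eta_k$ discrepancy.
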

\begin{lemma}[Upper bound in one step of FedAvg]\label{lemma:upper bound of one step}
Let Assumption~\ref{assumption:smooth},~\ref{assumption:strong_convex} and~\ref{assumption:bound of gradient} hold, in one step of DP-FedAvg, we have
\begin{equation*}
\begin{split}
\left\|\bar{\nu}_{k+1}-\theta^*\right\|_2^2&=(1-\mu\eta_{k})\left\|\bar{\theta}_k-\theta^*\right\|_2^2\\
&\quad+2\sum_{l\in\mathcal{N}}\frac{n_l}{n}\left\|\bar{\theta}_k-\theta_k^l\right\|_2^2+6\lambda\eta_{k}^2\Gamma.
\end{split}
\end{equation*}
\end{lemma}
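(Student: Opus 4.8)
\textbf{Proof plan for Lemma~\ref{lemma:upper bound of one step}.}
The plan is to track one gradient-descent step of the averaged iterate and expand the squared distance to $\theta^*$. Writing $\bar\nu_{k+1} = \bar\theta_k - \eta_k \nabla f(\theta_k)$ with $\nabla f(\theta_k) = \sum_{l}\frac{n_l}{n}\nabla f_l(\theta_k^l)$, I would start from
\begin{equation*}
\left\|\bar\nu_{k+1}-\theta^*\right\|_2^2 = \left\|\bar\theta_k-\theta^*\right\|_2^2 - 2\eta_k\left<\nabla f(\theta_k),\bar\theta_k-\theta^*\right> + \eta_k^2\left\|\nabla f(\theta_k)\right\|_2^2.
\end{equation*}
The first term is already what we want to carry; the work is to bound the cross term and the quadratic term and recombine them into the stated right-hand side. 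Note that, as written, the lemma's statement is an equality with no expectation, so presumably it is really meant as an upper bound (the ``$=$'' should be ``$\le$'') and the bound on $\|\nabla f(\theta_k)\|_2^2$ is implicitly folded into the constant in front of $\Gamma$; I would present it as an inequality and flag this.

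First I would handle the inner-product term. The standard device is to split $\bar\theta_k - \theta^* = (\bar\theta_k - \theta_k^l) + (\theta_k^l - \theta^*)$ inside the sum, so that
\begin{equation*}
-2\eta_k\left<\nabla f(\theta_k),\bar\theta_k-\theta^*\right> = -2\eta_k\sum_{l\in\mathcal{N}}\frac{n_l}{n}\left<\nabla f_l(\theta_k^l),\bar\theta_k-\theta_k^l\right> - 2\eta_k\sum_{l\in\mathcal{N}}\frac{n_l}{n}\left<\nabla f_l(\theta_k^l),\theta_k^l-\theta^*\right>.
\end{equation*}
For the second sum I would use $\mu$-strong convexity (Assumption~\ref{assumption:strong_convex}) in the form $\left<\nabla f_l(\theta_k^l),\theta^*-\theta_k^l\right> \le f_l(\theta^*) - f_l(\theta_k^l) - \frac{\mu}{2}\|\theta_k^l-\theta^*\|_2^2$, which produces the $-\mu\eta_k\|\theta_k^l-\theta^*\|_2^2$ pieces that must be consolidated (again via $\|\theta_k^l-\theta^*\|_2^2 \ge$ something comparable to $\|\bar\theta_k-\theta^*\|_2^2$, or by absorbing the cross terms) into the $(1-\mu\eta_k)\|\bar\theta_k-\theta^*\|_2^2$ factor. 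For the first sum I would use Cauchy--Schwarz together with $\lambda$-smoothness (Assumption~\ref{assumption:smooth}) and the AM--GM inequality $2ab \le \lambda a^2 + b^2/\lambda$ to split it into a term proportional to $\sum_l \frac{n_l}{n}\|\bar\theta_k-\theta_k^l\|_2^2$ and a term proportional to $\eta_k^2\|\nabla f_l(\theta_k^l)\|_2^2$ or to a smoothness-based surrogate $2\lambda(f_l(\theta_k^l)-f_l^*)$.

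For the quadratic term $\eta_k^2\|\nabla f(\theta_k)\|_2^2$, I would apply Jensen to pull the norm inside the average, then $\lambda$-smoothness via $\|\nabla f_l(\theta_k^l)\|_2^2 \le 2\lambda\big(f_l(\theta_k^l) - f_l^*\big)$. The next step, which I expect to be the main obstacle, is the bookkeeping that collects all the $f_l(\theta_k^l) - f_l^*$ and $f_l(\theta^*) - f_l(\theta_k^l)$ contributions: one has to add and subtract $f_l^*$ and $f^*$ appropriately so that the telescoping leaves exactly $f^* - \sum_l \frac{n_l}{n} f_l^* = \Gamma$ (Definition~\ref{def:non-iid}), with the surviving coefficient working out to $6\lambda\eta_k^2$, while ensuring every loss-difference term that does \emph{not} assemble into $\Gamma$ either cancels or has a sign that lets it be dropped. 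This is the same accounting that appears in the non-IID FedAvg analysis of~\cite{Li2020On}; the delicate points are keeping the step size small enough (implicitly $\eta_k \le \tfrac{1}{\text{const}\cdot\lambda}$, or the argument is stated for that regime) so the $-\frac{\mu}{2}\eta_k$ terms dominate the positive $\eta_k$ cross terms, and verifying the factor $2$ in front of $\sum_l\frac{n_l}{n}\|\bar\theta_k-\theta_k^l\|_2^2$ and the factor $6$ in front of $\lambda\eta_k^2\Gamma$ come out as claimed. I would finish by collecting the three groups — the $(1-\mu\eta_k)\|\bar\theta_k-\theta^*\|_2^2$ term, the divergence-of-local-parameters term (to be bounded later by Lemma~\ref{lemma:variance of E}), and the $6\lambda\eta_k^2\Gamma$ term — into the stated inequality.
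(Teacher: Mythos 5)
Your plan is correct and follows essentially the same route as the paper, which does not spell out this proof but explicitly defers to the one-step lemma of~\cite{Li2020On} (expansion of the squared distance, splitting the cross term via $\bar\theta_k-\theta^*=(\bar\theta_k-\theta_k^l)+(\theta_k^l-\theta^*)$, strong convexity plus smoothness with the step-size restriction $\eta_k\le\frac{1}{4\lambda}$, and the $\Gamma$ bookkeeping), the only modification being that full-batch local gradients remove the stochastic-variance term. Your observations that the stated ``$=$'' should be ``$\le$'' and that the step-size condition is implicit are both accurate and consistent with how the bound is actually used in the proof of Theorem~\ref{theorem:upper bound}.
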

The detailed proof of Lemma~\ref{lemma:unbaised sample} is presented in Appendix. The proof of Lemma~\ref{lemma:sample variance} and~\ref{lemma:variance of E} can be found in~\cite{Li2020On}.  Lemma~\ref{lemma:upper bound of one step} can be proved by modifying  the proof of in~\cite{Li2020On} a little bit. The only difference is that each client executes local iterations with a full batch of samples owned by the client in our case. 
\begin{theorem}\label{theorem:upper bound}
Let Assumptions~\ref{assumption:smooth},~\ref{assumption:strong_convex} and~\ref{assumption:bound of gradient} hold, and $\eta_k=\frac{2}{\mu}\frac{1}{k+\gamma}$ where $\gamma=\max(8\frac{\lambda}{\mu},E)$. The convergence rate of the DP-FedAvg algorithm is bounded by 
\begin{equation*}
\begin{split}
&Y_k\le\frac{1}{k+\gamma}\left(\frac{4}{\mu^2}\omega_0+\gamma Y_0\right)+\frac{4}{\mu^2}\frac{t}{(k+\gamma-1)^2}\omega_1,\\
&\omega_0=6\lambda\Gamma+8(E-1)^2G^2+4E^2G^2\frac{N-b}{N-1}\frac{1}{b},\\
\end{split}
\end{equation*}
where $\omega_1=C_\mathcal{M}E^2T_g^z $ and $C_\mathcal{M}$ represents a constant number related with the DP mechanism. For example, $C_\mathcal{M}=8pb\frac{\xi_1^2}{N^2\epsilon^2}$ for Laplace mechanism and $C_\mathcal{M}=8pc_2^2\log(1/\delta)\frac{\xi_2^2}{N\epsilon^2}$ for Gaussian mechanism. Here  $z$ is a constant number defined in Definition~\ref{DEF:AsyVar}, and its value is dependent on the DP mechanism. 
\end{theorem}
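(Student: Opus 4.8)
The plan is to establish the recursion on $Y_k$ step by step and then unroll it. First I would decompose $Y_{k+1} = \mathbb{E}\{\|\bar\theta_{k+1} - \theta^*\|_2^2\}$ by distinguishing whether $k+1 \in \mathcal{C}_E$ or not. When $k+1 \notin \mathcal{C}_E$ we have $\bar\theta_{k+1} = \bar\nu_{k+1}$, so Lemma~\ref{lemma:upper bound of one step} combined with Lemma~\ref{lemma:variance of E} immediately gives $Y_{k+1} \le (1-\mu\eta_k)Y_k + \eta_k^2\big(6\lambda\Gamma + 8(E-1)^2 G^2\big)$. When $k+1 \in \mathcal{C}_E$, I would write $\bar\theta_{k+1} = \bar\nu_{k+1}^b + \mathbf{w}_t^b$ and expand the square; the cross term vanishes because $\mathbf{w}_t^b$ is zero-mean and independent (as noted after Eq.~\eqref{equ:update}), and $\mathbb{E}\{\bar\nu_{k+1}^b\} = \bar\nu_{k+1}$ by Lemma~\ref{lemma:unbaised sample}. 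This yields $Y_{k+1} \le \mathbb{E}\{\|\bar\nu_{k+1} - \theta^*\|_2^2\} + \mathbb{E}\{\|\bar\nu_{k+1}^b - \bar\nu_{k+1}\|_2^2\} + \mathbb{E}\{\|\mathbf{w}_t^b\|_2^2\}$, and then I apply Lemma~\ref{lemma:sample variance} for the sampling-variance term, Lemma~\ref{lemma:upper bound of one step} plus Lemma~\ref{lemma:variance of E} for the first term, and Definition~\ref{DEF:AsyVar} (with the explicit Corollaries~\ref{cor:noise_lap} and~\ref{cor:noise_gau}) to bound $\mathbb{E}\{\|\mathbf{w}_t^b\|_2^2\} \le C_\mathcal{M} E^2 T_g^z \tilde\eta_t^2 \le C_\mathcal{M} E^2 T_g^z \eta_k^2$ (using $\eta_k$ monotonically decreasing so that $\tilde\eta_t = \eta_{tE} \ge \eta_k$ — here I must be careful about the direction of the inequality and choose the bounding constant accordingly). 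Collecting both cases, I get a uniform recursion $Y_{k+1} \le (1-\mu\eta_k) Y_k + \eta_k^2(\omega_0 + \omega_1)$ in communication rounds and $Y_{k+1}\le(1-\mu\eta_k)Y_k+\eta_k^2\omega_0$ otherwise, where $\omega_0$ and $\omega_1$ are exactly as stated.

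Next I would solve the recursion with the diminishing step size $\eta_k = \frac{2}{\mu}\frac{1}{k+\gamma}$. The standard technique is to show by induction that $Y_k \le \frac{v}{k+\gamma}$ for an appropriate $v$; with $\gamma = \max(8\lambda/\mu, E)$ one checks that $1 - \mu\eta_k \ge 0$ and that $(1-\mu\eta_k)\frac{v}{k+\gamma} + \eta_k^2 B \le \frac{v}{k+1+\gamma}$ holds whenever $v \ge \frac{4B}{\mu^2}$, using $(k+\gamma)(k+\gamma-1)\ge(k+\gamma-1)^2$ type manipulations. The subtlety is that the $\omega_1$ term is only incurred on the roughly $t = \lfloor k/E\rfloor$ rounds that are global-aggregation rounds, not on every iteration, which is why the final bound carries a factor $t$ in front of the $\omega_1$ contribution rather than a factor $k$. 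I would handle this by tracking the two contributions separately: the $\omega_0$ part accumulates every step and produces the $\frac{1}{k+\gamma}(\frac{4}{\mu^2}\omega_0 + \gamma Y_0)$ term via the usual induction, while the $\omega_1$ part is injected only $t$ times, each injection contributing roughly $\eta^2 \omega_1$ which propagates to size $O\!\big(\frac{\omega_1}{\mu^2(k+\gamma-1)^2}\big)$ at iteration $k$, and summing $t$ such injections gives the $\frac{4}{\mu^2}\frac{t}{(k+\gamma-1)^2}\omega_1$ term.

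The main obstacle I anticipate is the bookkeeping for the noise term: correctly arguing that the DP noise is independent across rounds and zero-mean so the cross term drops, correctly translating the asymptotic variance $\mathbb{V}_\mathcal{M} = O(\tilde\eta_t^2 E^2 T_g^z)$ into the concrete constant $C_\mathcal{M}$ for each mechanism (reading off $8pb\xi_1^2/(N^2\epsilon^2)$ from Corollary~\ref{cor:noise_lap} after substituting $\Xi_1 = \tilde\eta_t E\xi_1$ and $n^2 = \big(\sum n_l\big)^2$ versus $N^2\bar n^2$, and analogously for Gaussian), and then carefully carrying the ``incurred only $t$ times'' structure through the recursion so that the separated $t/(k+\gamma-1)^2$ factor emerges with the right power. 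The convexity/smoothness part is routine given Lemma~\ref{lemma:upper bound of one step}, and the induction on $\frac{v}{k+\gamma}$ is classical; the delicate accounting of \emph{where} and \emph{how often} each error source enters is where the argument needs the most care.
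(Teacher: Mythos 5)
Your proposal is correct and follows essentially the same route as the paper's proof: the same case split on $k+1\in\mathcal{C}_E$, the same use of Lemmas~\ref{lemma:unbaised sample}--\ref{lemma:upper bound of one step} and Corollaries~\ref{cor:noise_lap}--\ref{cor:noise_gau}, the same per-iteration recursion with $\omega_0$ and $\omega_1$, and the same induction that yields the $\frac{t}{(k+\gamma-1)^2}\omega_1$ term. The one point you flagged but left open, the direction of $\tilde\eta_t$ versus $\eta_k$, is handled in the paper exactly as you anticipated: since $\gamma\ge E$ gives $\eta_k\le 2\eta_{k+E}$, one has $\tilde\eta_t^2\le 4\eta_k^2$ for $k+1\in\mathcal{C}_E$, and this factor $4$ is absorbed into $C_\mathcal{M}$ (which is why it appears with the coefficient $8$ rather than $2$).
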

Here, we derive the convergence rate of DP-FedAvg with a general DP mechanism. $t$ is the index of the global iteration, and thus $t= \lfloor \frac{k}{E}\rfloor$.
As we can see from the convergence rate, $\omega_0$ governs the convergence of DP-FedAvg without the noise item; while $\omega_1$ is the influence of the noise item on the convergence rate. 
The proof of Theorem~\ref{theorem:upper bound} leverages Lemmas~\ref{lemma:unbaised sample}, \ref{lemma:sample variance}, \ref{lemma:variance of E} and \ref{lemma:upper bound of one step}, which can be found in the Appendix with details.

\subsection{Asymptotic Analysis}
To focus on the influence of iteration times, we conduct asymptotic analysis of the convergence rate by regarding all variables not related to the LI or GI as constant numbers.  
\subsubsection{Convergence Conditions}
The convergence rate derived in Theorem~\ref{theorem:upper bound} can be simplified as
\begin{eqnarray}
\label{EQ:AsympYT}
    Y_T &=& O\left(\frac{E^2}{T}\right ) +O\left(\frac{\omega_1}{ET}\right ),\nonumber\\
    & = &  O\left(\frac{E^2}{T}\right ) +O\left(E^{1-z}T^{z-1}\right).
\end{eqnarray}
Here, we utilize the fact that $\omega_1 = O(E^2T_g^z)$ and  $ET_g= T$ is the total number of iterations. Through Eq.~\eqref{EQ:AsympYT}, we prove the convergence conditions of DP-FedAvg as follows.
\begin{theorem}
\label{Them:ConvConditons}
$Y_T$ of the client based DP-FedAvg with the mechanism $\mathcal{M}$ will converge to $0$ as $T$ approaches infinity, if $z<1$ in $\mathbb{V}_{\mathcal{M}}$. The number of LIs should be set as  $E= O(T^{\frac{z}{z+1}} )$ to achieve the fastest convergence rate  $O(T^{\frac{z-1}{z+1}})$.
\end{theorem}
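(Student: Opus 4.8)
The plan is to argue entirely from the asymptotic form of the convergence rate already derived in Eq.~\eqref{EQ:AsympYT}, namely $Y_T = O(E^2/T) + O(E^{1-z}T^{z-1})$, treating the first summand as the ``optimization error'' carried over from vanilla FedAvg and the second as the ``DP-noise error,'' and tracking how each depends on $E$ and $T$ (with $z\in[0,2]$ fixed by Definition~\ref{DEF:AsyVar} and all model- and mechanism-dependent constants absorbed into $O(\cdot)$).

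First I would rewrite the noise term in the cleaner form $E^{1-z}T^{z-1} = (E/T)^{1-z}$, so that $Y_T = O(E^2/T) + O((E/T)^{1-z})$. For the convergence claim the sign of $1-z$ is decisive: when $z<1$ it is positive, so any local-iteration schedule with $E = o(\sqrt{T})$ (which in particular forces $E/T\to 0$) sends both summands to $0$ --- taking $E$ constant already works --- hence $Y_T\to 0$. I would also record the converse for context, since it underlies the subsequent case studies: at $z=1$ the noise term equals $(E/T)^{0}=\Theta(1)$ and cannot vanish (Gaussian, converging only to a constant), while at $z>1$ it equals $(T/E)^{z-1}\to\infty$ (Laplace, diverging).

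Next, for the prescribed number of local iterations, I would balance the two error contributions by solving $E^{2}/T = (E/T)^{1-z}$; this collapses to $E^{1+z}=T^{z}$, i.e.\ $E=\Theta(T^{z/(z+1)})$, and substituting back into either summand gives $Y_T=O(T^{(z-1)/(z+1)})$. Re-expressed through a fixed communication budget $T_g=T/E$, the bound reads $Y_T = O(E/T_g)+O(T_g^{z-1})$, which makes the role of $E$ transparent: $E=\Theta(T_g^{z})$ is exactly the largest number of local steps per round for which the DP-noise term $\Theta(T_g^{z-1})$ still governs, and it corresponds to $E=\Theta(T^{z/(z+1)})$ and rate $\Theta(T^{(z-1)/(z+1)})$.

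The step I expect to be the main obstacle is making the word ``fastest'' precise. Because $Y_T=O(E^{2}/T)+O((E/T)^{1-z})$ is termwise increasing in $E$ once $z<1$, an unconstrained minimization over $E$ would drive $E$ to its smallest value rather than to $T^{z/(z+1)}$; the statement must therefore be read as identifying the \emph{largest} admissible $E$ for which the DP-noise term remains the controlling term (equivalently, the scaling that equalizes the two error sources), which is the real content. Spelling this out cleanly, and verifying that the constants hidden in $O(\cdot)$ --- the $\lambda,\mu,\Gamma,G,b,N,p,\epsilon,\delta$ and $C_{\mathcal{M}}$ --- do not interfere with the $T$-asymptotics, is where attention is required; the balancing algebra itself is routine.
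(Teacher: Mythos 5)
Your proposal is correct and follows essentially the same route as the paper: both start from the asymptotic bound $Y_T = O(E^2/T) + O(E^{1-z}T^{z-1})$ in Eq.~\eqref{EQ:AsympYT}, obtain convergence for $z<1$ by keeping $E$ below $O(\sqrt{T})$ (a constant $E$ already suffices), and obtain $E=\Theta(T^{z/(z+1)})$ with rate $O(T^{(z-1)/(z+1)})$ by balancing the two terms, \emph{i.e.}, solving $E^2/T \asymp E^{1-z}T^{z-1}$. Where you differ is in how the choice of $E$ is justified, and your caveat is well taken: the paper's own proof asserts that $Y_T$ is convex in $E$ and sets its derivative to zero, but for $z<1$ both summands are increasing in $E$ (the derivative $O(E/T)+O\bigl((1-z)E^{-z}T^{z-1}\bigr)$ has no zero), so the stationarity step really only equates magnitudes; indeed, under the stated bound a constant $E$ yields the faster rate $O(T^{z-1})$ when $0<z<1$. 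Your reading --- that $E=\Theta(T^{z/(z+1)})$ is the largest number of local iterations per round for which the bound $O(T^{(z-1)/(z+1)})$ is retained, equivalently the scaling that equalizes the optimization and DP-noise errors (and the setting that genuinely minimizes the bound only in the regime $z\ge 1$ of Theorem~\ref{The:DivSetE}) --- is the precise way to make the theorem's ``fastest'' claim meaningful, and on this point your treatment is more careful than the argument given in the paper.
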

\begin{proof}

We first prove the convergence conditions. 
From the first term in Eq.~\ref{EQ:AsympYT}, we have $E<O(\sqrt{T})$, otherwise the first term will diverge.  Thus, for the second term we have $O(E^{1-z}T^{z-1} ) < O(T^{\frac{1-z}{2}+z-1} )$. It converges only if $\frac{1-z}{2}+z-1<0$ which implies that $z<1$. 

Then, we prove the optimal value of $E$ is $O(T^{\frac{z}{z+1}} )$. 
Since $Y_T = O(\frac{E^2}{T})+O(E^{1-z}T^{z-1})$, it is a convex function with respect to $E$. By letting the  differentiation of $Y_T$ with respect to $E$ equal to $0$, we have $O(\frac{E}{T})+O(E^{-z}T^{z-1}) = 0$. It turns out that we should set $E= O(T^{\frac{z}{z+1}} )$. 
\end{proof}

\begin{theorem}
\label{The:DivSetE}
$Y_T$ of the client based DP-FedAvg with the mechanism $\mathcal{M}$ cannot converge to $0$ as $T$ approaches infinity, if $z\geq 1$ in $\mathbb{V}_{\mathcal{M}}$. The number of LIs should be set as  $E= O(T^{\frac{z}{z+1}} )$ to minimize the influence of the DP noise item, which gives rise to the lowest divergence  rate  $O(T^{\frac{z-1}{z+1}})$.
\end{theorem}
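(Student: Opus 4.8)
The plan is to mirror the argument used for Theorem~\ref{Them:ConvConditons}, working from the asymptotic form of the convergence rate in Eq.~\eqref{EQ:AsympYT}, namely $Y_T = O(E^2/T) + O(E^{1-z}T^{z-1})$, and treating $E$ as a design parameter that may depend on $T$. The first summand is the ``noise-free'' part inherited from $\omega_0$ in Theorem~\ref{theorem:upper bound}; the second summand, coming from $\omega_1 = C_\mathcal{M}E^2T_g^z$ with $T_g = T/E$, is the contribution of the DP noise item, and it is the only term whose behaviour changes qualitatively once $z \ge 1$.

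For the non-convergence claim, I would first observe that if $E$ is not $O(\sqrt{T})$ then the first term $O(E^2/T)$ already fails to vanish, so the only regime worth examining is $E = O(\sqrt{T})$. In that regime, since $z \ge 1$ the exponent $1-z$ is non-positive, so $E^{1-z}$ is non-increasing in $E$ and $E^{1-z} \ge c\,T^{(1-z)/2}$ for some constant $c>0$; hence the noise term satisfies $E^{1-z}T^{z-1} \ge c\,T^{(1-z)/2 + z - 1} = c\,T^{(z-1)/2} \ge c > 0$. Therefore the bound on $Y_T$ stays bounded away from $0$ for every admissible $E$, i.e., the established learning-error bound cannot be driven to $0$ as $T\to\infty$.

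To pin down the best $E$, I would note that $g(E) := E^2/T + E^{1-z}T^{z-1}$ is convex on $E>0$ (its second derivative $2/T + z(z-1)E^{-z-1}T^{z-1}$ is nonnegative for $z\ge 1$), so its minimiser is obtained by setting $g'(E) = 2E/T + (1-z)E^{-z}T^{z-1} = 0$, which yields $E = O(T^{z/(z+1)})$. Substituting this back makes both terms of order $T^{2z/(z+1)-1} = T^{(z-1)/(z+1)}$, giving the claimed divergence rate $O(T^{(z-1)/(z+1)})$; since $z\ge 1$ we have $(z-1)/(z+1)\in[0,1)$, which is consistent with the non-convergence statement, and at the boundary $z=1$ the rate reduces to $O(1)$, matching the constant-error behaviour later observed for the Gaussian mechanism.

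I expect two minor obstacles rather than a hard one. The first is interpretational: the convergence-rate theorem supplies an \emph{upper} bound on $Y_T$, so ``$Y_T$ cannot converge to $0$'' is really the statement that this bound is non-vanishing for all admissible $E$; a genuine lower bound on $Y_T$ would require exhibiting a worst-case instance, which is beyond the tools assembled in the excerpt, so I would phrase the conclusion accordingly. The second is the edge case $z=1$, where the noise term degenerates to a quantity independent of $E$ and the convex minimisation becomes trivial (any $E = o(\sqrt{T})$ works); there one simply checks that $E = O(T^{z/(z+1)}) = O(\sqrt{T})$ remains a valid ``not-too-large'' choice and that the rate formula $O(T^{(z-1)/(z+1)})$ still evaluates correctly to $O(1)$.
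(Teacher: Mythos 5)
Your proposal is correct and follows essentially the same route as the paper: simplify to $Y_T = O(E^2/T) + O(E^{1-z}T^{z-1})$, note that $E$ must be $O(\sqrt{T})$ for the first term, and minimize the convex function of $E$ by differentiation to get $E = O(T^{\frac{z}{z+1}})$ and the rate $O(T^{\frac{z-1}{z+1}})$. Your treatment is in fact somewhat more careful than the paper's (which merely points back to the proof of Theorem~\ref{Them:ConvConditons}): you correctly reverse the inequality on $E^{1-z}$ when $z \ge 1$ to lower-bound the noise term, verify convexity explicitly, flag that the non-convergence claim is really about the derived upper bound rather than a lower bound on $Y_T$ itself, and handle the degenerate case $z=1$, all of which the paper leaves implicit.
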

\begin{proof}
From the proof of Theorem~\ref{Them:ConvConditons}, we can see that $Y_T$ cannot converge if $z\geq 1$.

Similar to the proof of Theorem~\ref{Them:ConvConditons}, we should set $E= O(T^{\frac{z}{z+1}} )$ to achieve $Y_T=O(T^{\frac{z-1}{z+1}})$. However, due to $z\geq 1$,  $Y_T$ will diverge with the minimum rate 
$O(T^{\frac{z-1}{z+1}})$ as $T$ approaches infinity. 
\end{proof}

\begin{corollary}
If $z>1$ in $\mathbb{V}_{\mathcal{M}}$, there exists a $T^*<\infty$ that can minimize $Y_T$.
\end{corollary}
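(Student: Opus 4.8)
The plan is to view the right-hand side of the bound in Theorem~\ref{theorem:upper bound} --- equivalently, the asymptotic expression in Eq.~\eqref{EQ:AsympYT} --- as a function of the total iteration count $T$, with the per-GI local iteration count $E$ held fixed (the standard FL setting of a constant number of local iterations per round). Substituting $T_g=T/E$ and $\omega_1=C_{\mathcal M}E^{2}T_g^{z}$ into Theorem~\ref{theorem:upper bound}, the established bound becomes
\begin{equation*}
Y_T\;\le\;\Phi(T)\;:=\;\frac{\alpha}{T+\gamma}+\beta\,\frac{T^{\,z+1}}{(T+\gamma-1)^{2}},
\end{equation*}
where $\alpha=\frac{4}{\mu^{2}}\omega_0+\gamma Y_0>0$ and $\beta=\frac{4}{\mu^{2}}C_{\mathcal M}E^{\,1-z}>0$; asymptotically $\Phi(T)=\Theta(1/T)+\Theta(T^{\,z-1})$, consistent with Eq.~\eqref{EQ:AsympYT}. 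The goal is then to show that $\Phi$ attains its minimum at a finite $T^{*}$.

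First I would record the behavior at the two ends of the domain. As $T\to\infty$ the second term of $\Phi$ grows like $\beta T^{\,z-1}$, so for $z>1$ we get $\Phi(T)\to\infty$ --- this is the divergence already isolated in Theorem~\ref{The:DivSetE}. At the small-$T$ end, $\Phi$ is finite and continuous with $\Phi(0^{+})=\alpha/\gamma$, and differentiating,
\begin{equation*}
\Phi'(T)=-\frac{\alpha}{(T+\gamma)^{2}}+\beta\,\frac{T^{z}\big[(z-1)T+(z+1)(\gamma-1)\big]}{(T+\gamma-1)^{3}},
\end{equation*}
one sees the second term vanishes as $T\to0^{+}$ (because $z>0$), so $\Phi'(T)<0$ for all sufficiently small $T$; hence $\Phi$ strictly decreases on an initial interval. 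Continuity of $\Phi$ on $(0,\infty)$ together with ``$\Phi$ decreases near $0$'' and ``$\Phi\to+\infty$'' forces the global infimum of $\Phi$ to be attained at a finite interior stationary point $T^{*}$, and the same conclusion holds over the physically relevant positive integers. Balancing the dominant powers in $\Phi'(T^{*})=0$ --- i.e.\ $\alpha/T^{2}$ against $\beta(z-1)T^{z-2}$ --- gives $T^{*}=\Theta\!\big(E^{(z+1)/z}\big)$, consistent with the rule $E=O(T^{z/(z+1)})$ of Theorem~\ref{The:DivSetE}.

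The main obstacle is not the existence of a finite minimizer --- that is immediate once $\Phi$ diverges at infinity while staying finite on its domain --- but arguing that $T^{*}$ is \emph{interior and nontrivial}, i.e.\ that $\Phi$ genuinely decreases before it increases rather than being monotone; this is exactly what the sign analysis of $\Phi'$ near $0$ supplies. If in addition one wants $T^{*}$ to be the \emph{unique} minimizer, one must verify that $\Phi'$ changes sign exactly once, equivalently that $T\mapsto g'(T)(T+\gamma)^{2}$ is strictly increasing for $g(T)=T^{\,z+1}/(T+\gamma-1)^{2}$. Two sub-cases need slightly more care: for $1<z<2$ both terms of $\Phi'$ tend to $0$ at infinity, so one must compare decay rates ($T^{z-2}$ beats $T^{-2}$) to confirm $\Phi'(T)>0$ for large $T$; and the degenerate value $\gamma=1$ should be dispatched separately, although $\gamma=\max(8\lambda/\mu,E)>1$ in any nontrivial setting.
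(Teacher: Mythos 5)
Your proposal is correct, and it reaches the conclusion by a slightly different (and in one respect more careful) route than the paper. The paper's own proof is a one-liner: it asserts that the bound on $Y_T$ is convex in $T$ and diverges when $z>1$, so a finite minimizer exists. You instead avoid any convexity claim: after substituting $T_g=T/E$ into Theorem~\ref{theorem:upper bound} (your $\Phi(T)$ and the constants $\alpha,\beta$ are computed correctly), you argue from continuity of $\Phi$ on its domain plus $\Phi(T)\to\infty$ as $T\to\infty$ that the infimum is attained at some finite $T^*$, and you add a sign analysis of $\Phi'$ near $T=0$ to show the minimizer is interior, with the consistency check $T^*=\Theta\bigl(E^{(z+1)/z}\bigr)$ against Theorem~\ref{The:DivSetE}. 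This is worth something: the paper's convexity assertion is immediate only for $z=2$ (the Laplace case), whereas for $1<z<2$ the term $\Theta(T^{z-1})$ in Eq.~\eqref{EQ:AsympYT} is concave in $T$ and convexity of the sum is not obvious, so your continuity-plus-divergence argument covers the full range $z>1$ without that gap; what you lose relative to a genuine convexity argument is uniqueness of $T^*$, which you correctly flag as requiring a separate single-sign-change verification and which the corollary does not claim anyway. Your interiority step and the treatment of the degenerate case $\gamma=1$ are refinements beyond what the statement demands, but they do no harm.
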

The proof is straightforward given that $Y_T$ is a convex function with respect to $T$ and $Y_T$ will diverge if $z>1$. {Unfortunately, it is extremely difficult to precisely derive $T^*$ due to the difficulty to exactly estimate the value of each parameter such as $\mu$, $\Gamma$ and $G$ in $Y_T$.} 


\subsubsection{Case Study}

For the Laplace and the Gaussian mechanisms, they are the special cases of our general analysis.  
For the Laplace mechanism, we have $z=2$, which implies that the Laplace mechanism can never converge. The divergence rate can be minimized by setting $E$ as an increasing function of $T$. According to the Theorem~\ref{The:DivSetE}, we should set $E=O(T^{\frac{2}{3}})$ and $Y_T$ will diverge with the lowest rate $O(T^{\frac{1}{3}})$ as $T$ approaches infinity.

\noindent{\bf Remark:} We can see the significance of our study from This case study. If we set $E = O(1)$ for the Laplace mechanism, $Y_T$ will diverge with the rate $O(T)$, which is much higher than the optimal rate $O(T^{\frac{1}{3}})$. Thus, setting $E$ properly with our method can markedly  improve the practicability of DP in FL.  

The DP based FedAvg with the Gaussian mechanism is a very special case.  According to the Theorem~\ref{The:DivSetE}, $Y_T$ cannot converge to $0$. However, if we set $E= O(1)$, $Y_T$ will converge to $O(1)$ as $T$ approaches infinity.

\section{Experiment}\label{sec:experiment}

In this section, we carry out extensive experiments with two real-world datasets to validate our theoretical analysis. 

\subsection{Dataset Description and Preprocessing}

\subsubsection{Lending Club Dataset}

The Lending Club dataset~\cite{wu2019value} contains about $890,000$ loan data records. Each  record in the dataset includes some features of borrowers (\emph{e.g.}, zip code, address, request amount of money, and interest rate).

The dataset is used to predict customers' interest rates. We first clean the data by discarding meaningless features (\emph{e.g.}, ID, address, and job title). Then we convert some valuable text features to numbers, such as job year and grade. We discard those features missed by more than $70\%$ of data records. After the above preprocessing operations, there remain $66$ features, though some features are redundant and some are almost always $0$. We use Principal Component Analysis (PCA) to further reduce $66$ features into $10$ effective features. In the experiments, we use randomly sampled $500,000$ data records as the whole dataset. And then we randomly sample $80\%$ data as the training dataset and the rest as the test dataset. To setup the non-IID distribution, all records in the training dataset are sorted  by the interest rate in ascending order before they are evenly grouped into  $N=5,000$ groups. Each client will be assigned a single sample group. 

\subsubsection{ FEMNIST Dataset}
FEMNIST is an open source dataset presented in~\cite{caldas2018leaf} as a benchmark for FL study. FEMNIST contains $803,267$ hand-written image samples written by $3,500$ users. There are $62$ labels (\emph{i.e.}, $10$ digits, $26$ lowercase, $26$ uppercase) in the dataset. We use $90\%$ samples to train the model and the rest as the test set. Because FEMNIST is naturally unbalanced and non-IID, we simulate $3,500$ clients and assign each client with a user's samples.

\subsection{Models and Experimental Settings}

We construct a Linear Regression (LR) model with $11$ parameters for the Lending Club dataset to predict the interest rate of each borrower.  Mean Square Error (MSE) is used to define the loss function and evaluate the model prediction performance. 

For the FEMNIST dataset, we construct a Convolutional Neural Network (CNN) model to classify images. The CNN model consists of two convolutional layers followed by one fully connected layer. The first convolution layer uses $7\times7$ kernels with padding $3$. The second layer uses $3\times3$ kernels with padding $1$. The first layers has $32$ channels while the second has $64$ channels. Both convolutional layers are followed by a ReLU and $2\times2$ max pool. The fully connected layer has $62$ units and takes the flatted output of the second convolutional layer as the input. The CNN model contains $214,590$ parameters.
CrossEntropy is used to define the loss function.



In the experiments, we use the MSE loss function  and the model accuracy on the test datasets to evaluate the LR model and CNN model, respectively. We can evaluate the model performance over the test set after each round of global iteration.
If $E>1$, the model is only evaluated when $k\in\mathcal{C}_E$.

For each case, we implement a benchmark model, which is a special case with $\epsilon = \infty$, \emph{i.e}., there  is no DP noise. 
To get rid of the influence of the randomness due to random noises and random client selection in ECPs,  we plot the average results by executing each experiment case for $20$ times. 
By default, we set $b=500$ for the LR model and $b=350$ for the CNN model. 

We impose the Laplace mechanism on the LR model and the Gaussian mechanism on the CNN model. Throughout the following experiments, we set a fixed $\delta=0.0001$ for the Gaussian mechanism.

In practice, it is very difficult to exactly compute the sensitivity, \emph{i.e.}, $\xi$, of the gradient functions, and thus we use the norm clipping technique with a clipping threshold $\zeta$ to restrict the range of client's gradients. If a client's some gradient exceeds $\zeta$, it will be clipped to $\zeta$. 
Formally, the gradient is clipped by 

\begin{equation*}
 \nabla {f}_l(\theta_k)=\nabla f_l(\theta_k)/\max\left(1,\frac{\left\|\nabla f_l(\theta_k)\right\|_1}{\zeta}\right),   
\end{equation*}

where we set $\zeta=150$ for LR model and $\zeta=3$ for CNN model. 
This technique is firstly introduced by \cite{Abadi_2016} and widely used in the implementation of DP in practical machine learning systems \cite{brendan2018learning,geyer2017differentially}.

\subsection{Results and Discussions}
In this part, we present the results of our experiments.

\subsubsection{Impact of $\epsilon$}
To inspect the influence of the privacy budge on the model training process, we conduct experiments by enumerating different $\epsilon$'s. For the LR model, we set $T=100, E=1$ and $\epsilon=1.0, 3.0, 5.0$ or $\infty$; while for the CNN model, we set $T=250, E=5$ and $\epsilon=10.0, 20.0, 30.0$ or $\infty$. 
Note that we set relatively large $\epsilon$'s for the CNN model due to the large number of parameters in CNN. Similar $\epsilon$'s were also used in other works such as~\cite{wei2019federated,wu2019value}.  

The loss function of the LR model is obtained on the test set of  the Lending Club dataset. In Fig.~\ref{fig:Lap_exp1}, we can observe that there exists a gap between the loss function curves with DP noises and the loss function curve with $\epsilon = \infty$. In particular, the loss function diverges rapidly when $\epsilon$ is very small. It shows the challenge to make DP based FL practical, especially when the privacy budget is small. 

\begin{figure}
    \centering
    \includegraphics[width=\linewidth]{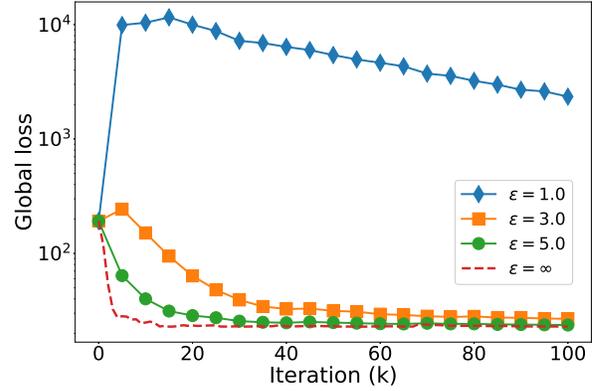}
    \caption{Comparing the loss function of  DP-FedAvg in each iteration $k$ with $T=100$ and different $\epsilon$'s using the Lending Club dataset and Laplace mechanism.. $\epsilon=\infty$ represents the noise-free FedAvg.  }
    \label{fig:Lap_exp1}
\end{figure}

The accuracy  of the CNN model evaluated on the FEMNIST test set  is plotted in Fig.~\ref{fig:Gau_exp1} against the iteration times. In Fig.~\ref{fig:Gau_exp1}, we can draw a similar conclusion that the model accuracy can be impaired by the DP noises significantly such that  implementing DP in FL is challenge when $\epsilon $ is small. 
\begin{figure}
    \centering
    \includegraphics[width=\linewidth]{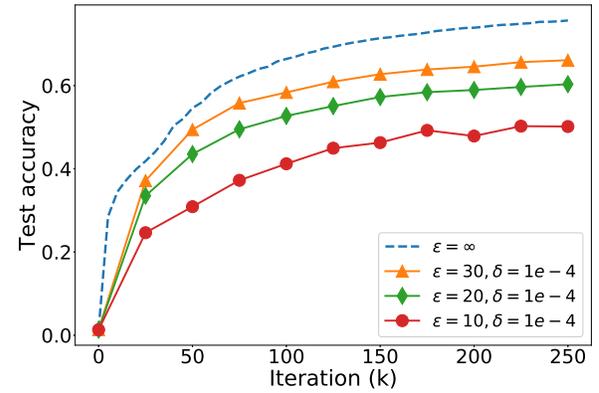}
    \caption{Comparing the model accuracy of DP-FedAvg in each iteration $k$ with $T=50$ and different $\epsilon$'s using the FEMNIST dataset and the Gaussian mechanism. $\epsilon=\infty$ represents the noise-free FedAvg.}
    \label{fig:Gau_exp1}
\end{figure}

\subsubsection{Impact of $T$}

We conduct experiments to verify Theorem~\ref{The:DivSetE}, which states that: 1) $Y_T$ approaches infinity as $T$ increases and  there exists some $T^*<\infty$ such that $Y_T$ is optimized if the Laplace mechanism is adopted; 2) $Y_T$ converges to $O(1)$ as $T$ approaches infinity if the Gaussian mechanism is adopted.

\begin{figure}
    \centering
    \includegraphics[width=\linewidth]{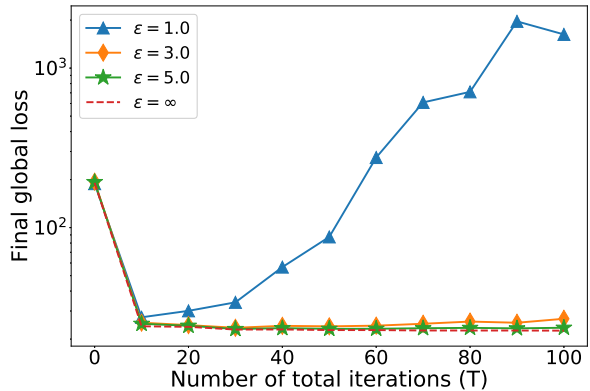}
    \caption{Comparing the loss function of DP-FedAvg after $T$ iterations with different $\epsilon$'s using the Lending Club dataset and the Laplace mechanism.}
    \label{fig:Lap_exp2}
\end{figure}

\begin{table}[htbp]
    \centering
        \caption{Final global loss of the experiment in Fig.~\ref{fig:Lap_exp2}. }
    \label{tab:Lap_exp2}
    \begin{tabular}{llll}
    \toprule
       $\epsilon$ & $T^*$ & final global loss of $T^*$  & \bf{worst} \\
    \midrule
        $1.0$ & 10  & 27.40&1963.41 \\
        $3.0$ & 30 & 23.55 &26.82 \\
        $5.0$ & 30 & 23.13 &24.94 \\
        $\infty$ & 100  & 22.59 &24.06 \\
    \bottomrule
    \end{tabular}
\end{table}

In Fig.~\ref{fig:Lap_exp2}, we evaluate the LR model using the Lending Club dataset and the Laplace mechanism by varying $T$ from $0$ to $100$ with a step size $10$ and set $E=1$. We plot the loss function $Y_T$ against $T$.  {Note that $Y_T$ is different from $Y_k$ where $Y_T$ is the final loss after $T$ iterations and $Y_k$ is the loss after  $k$ iterations. } We also show the accurate value of $T^*$ and the corresponding loss in Table~\ref{tab:Lap_exp2}.
As we can see that the loss function diverges finally as $T$ increases. For each experiment case, there exists some $T^*$ such that the minimum value of the loss function is achieved. 

For the  CNN model, we use the FEMNIST dataset for evaluation by varying $T$ from $0$ to $500$ with a step size of $50$ and $E=5$. As shown in Fig.~\ref{fig:Gau_exp2}, the model accuracy gets stable at some value as $T$ approaches infinity  per the Gaussian mechanism is adopted. This result  implies that $Y_T$ approaches $O(1)$ instead of diverging to infinity as $T$ approaches infinity.

\begin{figure}
    \centering
    \includegraphics[width=\linewidth]{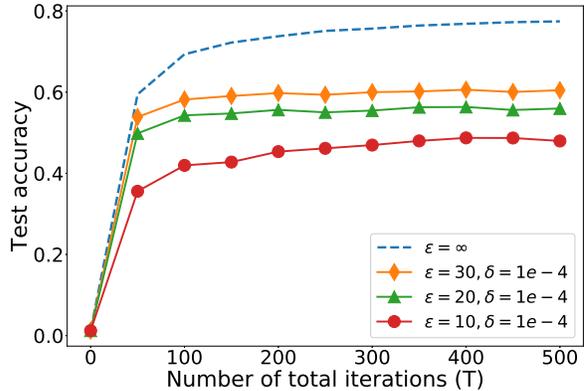}
    \caption{Comparing the model accuracy of DP-FedAvg after $T$ iterations with different $\epsilon$'s by using the FEMNIST dataset and the Gaussian mechanism.}
    \label{fig:Gau_exp2}
\end{figure}

\subsubsection{Impact of $E$}
We conduct experiments to show the importance to tune $E$, \emph{i.e.}, the number of local iterations, in DP-FedAvg.  

We  enumerate different $E$'s to see how $E$ can affect the model training process.
In Fig.~\ref{fig:Lap_exp3}, we evaluate  the LR model using the Lending Club dataset and the Laplace mechanism, by enumerating $E=1,2,3,4,6,12$ and  fixing $T=120$ with different $\epsilon$'s. 
Similarly, in Fig.~\ref{fig:Gau_exp3}, we evaluate the  CNN model by fixing $T=240$ and enumerating $E=1,2,3,4,6,8,12,24$ with different $\epsilon$'s. To accurately show the optimal $E$, we list the results in Fig.~\ref{fig:Lap_exp3} and Fig.~\ref{fig:Gau_exp3} in Table~\ref{tab:Lap_exp3} and Table~\ref{tab:Gau_exp3}.

Through observing Fig.~\ref{fig:Lap_exp3} and Fig.~\ref{fig:Gau_exp3}, we can draw the following observations.
\begin{itemize}
    \item In both Fig.~\ref{fig:Lap_exp3} and Fig.~\ref{fig:Gau_exp3},  for all cases with different $\epsilon$'s,  if $E$ has  not been set properly, the performance of the trained model can be very poor. For example, the loss function of the LR model is very high if $E=1$ when $\epsilon =1.0$.  This experiment manifests the importance to tune $E$ in DP based FL.  
    
    \item $E$ should be neither too large or too small. Intuitively, if $E$ is too large, it means $T_g$ will be too small and the model training process is not iterated sufficiently. Inversely, if $E$ is too small, it means $T_g$ is too large, and hence the model training process can be impaired  by the large DP noises. Recall that $\mathbb{V}_{\mathcal{M}} =O(\tilde{\eta}_t^2E^2T_g^z)$.
    
    \item Through comparing cases with different $\epsilon$'s in both figures, we can find the rule of thumb that $E$ should be set a larger value if $\epsilon $ is smaller. 
    
\end{itemize}

\begin{figure}
    \centering
    \includegraphics[width=\linewidth]{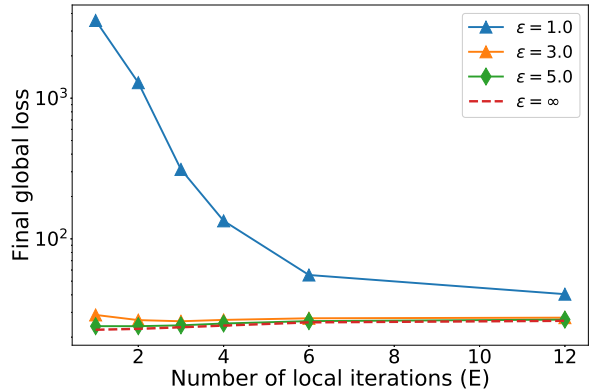}
    \caption{Comparing the final global loss of DP-FedAvg after $T$ iterations with various $E$'s, fixed $T=120$ and different $\epsilon$'s by using the Lending Club dataset and the Laplace mechanism.}
    \label{fig:Lap_exp3}
\end{figure}

\begin{table}[htbp]
    \centering
        \caption{Final global loss of the experiment in Fig.~\ref{fig:Lap_exp3}. }
    \label{tab:Lap_exp3}
    \begin{tabular}{llll}
    \toprule
       $\epsilon$ & $E^*$ & final global loss of $E^*$  & \bf{worst} \\
    \midrule
        $1.0$ & 12  & 40.44&3568.41 \\
        $3.0$ & 3 & 25.95 &28.78 \\
        $5.0$ & 2 & 23.94 &26.65 \\
        $\infty$ & 1  & 22.58 &26.14 \\
    \bottomrule
    \end{tabular}
\end{table}

\begin{figure}
    \centering
    \includegraphics[width=\linewidth]{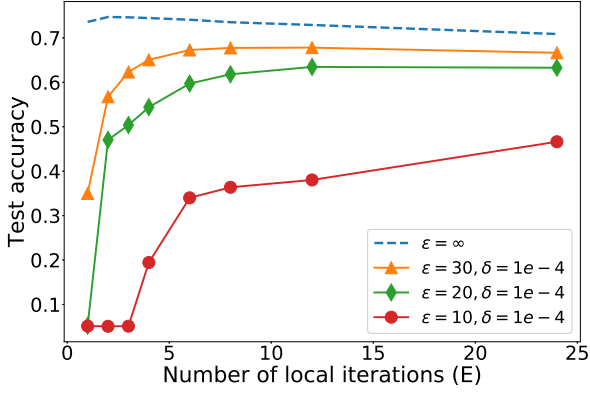}
    \caption{Comparing the model accuracy of DP-FedAvg after $T$ iterations with various $E$'s, fixed  $T=50$ and different $\epsilon$'s by using FEMNIST and the Gaussian mechanism.}
    \label{fig:Gau_exp3}
\end{figure}

\begin{table}[htbp]
    \centering
        \caption{Test accuracy of the experiment in Fig.~\ref{fig:Gau_exp3}. }
    \label{tab:Gau_exp3}
    \begin{tabular}{llll}
    \toprule
       $\epsilon$ & $E^*$ & test accuracy of $E^*$  & \bf{worst} \\
    \midrule
        $10$ & 24  & 0.4663&0.0509 \\
        $20$ & 12 & 0.6347 &0.0521 \\
        $30$ & 12 & 0.6781 &0.3500 \\
        $\infty$ & 2  & 0.7470 &0.7088 \\
    \bottomrule
    \end{tabular}
\end{table}

To verify Theorem~\ref{Them:ConvConditons}, we conduct the experiment by using the Lending Club dataset and the Laplace mechanism. In this experiment, we set $E$ as a function of $T$. Specifically, we enumerate $E = 1, T^\frac{1}{3}, T^\frac{1}{2}, T^\frac{2}{3} \textit{ or } T$ by varying  $T$ from $1$ to $120$ with a step size of $10$. Note in this experiment, we set $b=5000$ to enable very small $T_g$ (\emph{i.e.}, $T_g=1$). If $E$ (as a function of $T$) is not an integer, it will be rounded to the nearest integer. 
According to Theorem~\ref{Them:ConvConditons}, it is optimal to set $E^*=T^{2/3}$. The result is presented in Fig.~\ref{fig:Lap_exp4}, which indeed confirms the correctness of our theoretical analysis. The curve with $E=T^{\frac{2}{3}}$ is the best one achieving the lowest loss function. Although, $Y_T$ will diverge  with the increase of $T$, its divergence rate is very low if $E=T^{\frac{2}{3}}$. Another important implication is that the setting of $T$ is not very sensitive if $E$ is chosen properly. 

\begin{figure}
    \centering
    \includegraphics[width=\linewidth]{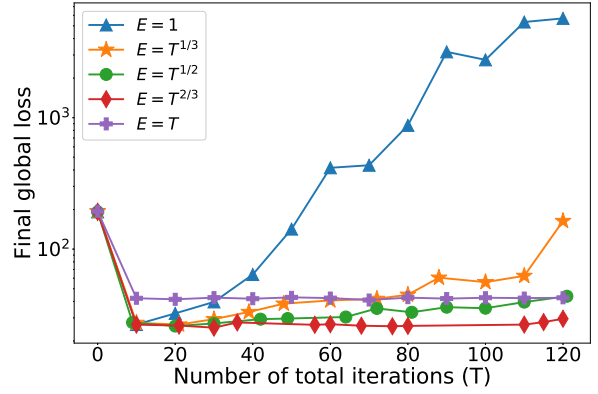}
    \caption{Comparing the final global loss of DP-FedAvg after $T$ iterations with a fixed $\epsilon=3$ and different $E$'s by using the Lending Club dataset and the Laplace mechanism. }
    \label{fig:Lap_exp4}
\end{figure}

\section{Conclusion}\label{sec:conclusion}

We investigate the practicability of DP mechanisms in FL. A general DP based FedAvg algorithm is presented first. Then, both the Laplace mechanism and the Gaussian mechanism are introduced into the general DP based FedAvg algorithm.
Through tuning the number of local/global iterations, we show from both theoretical analysis and experimental study that it is very important to set the number of iteration times properly such that the influence of the DP noises can be minimized. Specifically, via defining the asymptotic variance of the DP noises, we derive the conditions for DP-FedAvg to converge, and the optimal choice of the number of local iterations. The Lending Club and FEMNIST datasets are utilized to conduct extensive experiments. The experiment results not only verify our theoretical analysis, but also demonstrates that the choice of the number of local iterations can heavily affect the model training performance.


%

\appendices
\section{Proof of Theorem~\ref{theorem:Lap}}\label{proof:theorem_Lap}
\begin{proof}
We first show that the Laplace mechanism gives $\frac{\epsilon}{T_l}$-DP in each global iteration. Assume $\mathcal{D}_l, \mathcal{D}_l'$ are two data set differing at most one data. We have
\begin{equation*}
\begin{split}
    &\left\|\mathfrak{Q}_{t}\left(\mathcal{D}_l\right)-\mathfrak{Q}_{t}\left(\mathcal{D}_l'\right)\right\|_1\\
    &=\left\|\theta_{(t+1)E}^l-\theta_{(t+1)E}^{l'}\right\|_1\\
    &=\left\|\sum_{i=tE}^{(t+1)E-1}\eta_i\left(\nabla f_{l'}\left(\theta_i^{l'}\right)-\nabla f_l\left(\theta_i^l\right)\right)\right\|_1\\
    &\le\sum_{i=tE}^{(t+1)E-1}\eta_i\left\|\nabla f_{l'}\left(\theta_i^{l'}\right)-\nabla f_l\left(\theta_i^l\right)\right\|_1\\
    &\le \sum_{i=tE}^{(t+1)E-1}\eta_i\xi_1\\
    &\le E\xi_1\tilde{\eta}_{t}=\Xi_1.\\
\end{split}
\end{equation*}
$\forall t,\, \forall \mathbf{z}\in\mathbb{R}^p$, we have
\begin{equation*}
\begin{split}
    &\frac{\Pr_{\mathcal{D}_l}(\mathbf{z})}{\Pr_{\mathcal{D}_l'}(\mathbf{z})}\\
    &=\prod_{i=0}^{p-1}\left(\frac{\exp\left(-\frac{|\mathfrak{Q}_{t}\left(\mathcal{D}_l\right)[i]-\mathbf{z}[i]|}{\beta_t^l}\right)}{\exp\left(-\frac{|\mathfrak{Q}_{t}\left(\mathcal{D}_l'\right)[i]-\mathbf{z}[i]|}{\beta_t^l}\right)}\right)\\
    &=\prod_{i=0}^{p-1}\exp\left(\frac{|\mathfrak{Q}_{t}\left(\mathcal{D}_l\right)[i]-\mathbf{z}[i]|-|\mathfrak{Q}_{t}\left(\mathcal{D}_l'\right)[i]-\mathbf{z}[i]|}{\beta_t^l}\right)\\
    &\le \prod_{i=0}^{p-1}\exp\left(\frac{\left|\mathfrak{Q}_{t}\left(\mathcal{D}_l\right)[i]-\mathfrak{Q}_{t}\left(\mathcal{D}_l'\right)[i]\right|}{\beta_t^l}\right)\\
    &=\exp\left(\frac{\epsilon\left\|\mathfrak{Q}_{t}\left(\mathcal{D}_l\right)-\mathfrak{Q}_{t}\left(\mathcal{D}_l'\right)\right\|_1}{\Xi_1 T_l}\right)\le \exp\left(\frac{\epsilon}{T_l}\right).
\end{split}
\end{equation*}
Corollary 3.15 in \cite{dwork2014algorithmic} proves this Laplace mechanism gives $\epsilon$-DP over $T_l$ rounds.
\end{proof}
\section{Proof of Theorem~\ref{theorem:Ga}}\label{proof:theorem_Ga}
\begin{proof}
We show that the accumulated gradient is bounded.
\begin{equation*}
\begin{split}
        \left\|\theta_{tE}^l-\theta_{(t+1)E}^l\right\|_2&=\left\|\sum_{i=tE}^{(t+1)E-1}\eta_i\nabla f_l\left(\theta_i^l\right)\right\|_2\\
        &\le\sum_{i=tE}^{(t+1)E-1}\eta_i\left\|\nabla f_l\left(\theta_i^l\right)\right\|_2\\
        &\le\sum_{i=tE}^{(t+1)E-1}\eta_i\xi_2\\
        &\le E\xi_2\tilde{\eta}_t=\Xi_2.
\end{split}
\end{equation*}
Then by utilizing the Theorem~1 in~\cite{Abadi_2016}, we show that the Gaussian mechanism gives $(\epsilon,\delta)$-differential privacy for each client.
\end{proof}
\section{Proof of Corollary~\ref{cor:noise_lap}}\label{proof:noise_lap}
\begin{proof}
We first show that for each client $l$,
\begin{equation*}
\begin{split}
    \mathbb{E}\left\{\left\|\mathbf{w}_{t}^l\right\|_2^2\right\}
    &=\mathbb{E}\left\{\sum_{i=0}^{p-1}\left|\mathbf{w}_{t}^l[i]\right|^2\right\}\\    
    &=p\mathbb{E}\left\{\left|\mathbf{w}_{t}^l[0]\right|^2\right\}\\
    &=2p(\beta_t^l\Xi_1)^2=2p\frac{\Xi_1^2T_l^2}{\epsilon^2},
\end{split}   
\end{equation*}
where we use the fact that $\mathbf{w}_{t+1}^l[i]$ are IID random variables. For $b$ clients, we have
\begin{equation*}
\begin{split}
    \mathbb{E}\left\{\left\|\mathbf{w}_{t}^b\right\|_2^2\right\}
    &=\frac{N^2}{b^2n^2}\mathbb{E}\left\{\left\|\sum_{l\in\mathcal{P}_t}n_l\mathbf{w}_{t}^l\right\|_2^2\right\}\\
    &=\frac{N^2}{b^2n^2}\sum_{l\in\mathcal{P}_t}\mathbb{E}\left\{\left\|n_l\mathbf{w}_{t}^l\right\|_2^2\right\}\\
    &=2pb\frac{\Xi_1^2T_g^2}{n^2\epsilon^2}\bar{n}^2.
\end{split} 
\end{equation*}
The second equality holds because $\mathbf{w}_{t}^l$ are independent and $\mathbb{E}\left\{\mathbf{w}_{t}^l\right\}=0$.
\end{proof}

\section{Proof of Corollary~\ref{cor:noise_gau}}\label{proof:noise_gau}
\begin{proof}
We first show that for each client $l$,
\begin{equation*}
\begin{split}
    \mathbb{E}\left\{\left\|\mathbf{w}_{t}^l\right\|_2^2\right\}
    &=\mathbb{E}\left\{\sum_{i=0}^{p-1}\left|\mathbf{w}_{t}^l[i]\right|^2\right\}\\    
    &=p\mathbb{E}\left\{\left|\mathbf{w}_{t}^l[0]\right|^2\right\}\\
    &=2p(\sigma_t^l\Xi_2)^2\\
    &=2pc_2^2\log(1/\delta)\frac{\Xi_2^2T_l}{\epsilon^2},
\end{split}   
\end{equation*}
where we use the fact that $\mathbf{w}_{t+1}^l[i]$ are IID random variables. For $b$ clients, we have
\begin{equation*}
\begin{split}
    \mathbb{E}\left\{\left\|\mathbf{w}_{t}^b\right\|_2^2\right\}
    &=\frac{N^2}{b^2n^2}\mathbb{E}\left\{\left\|\sum_{l\in\mathcal{P}_t}n_l\mathbf{w}_{t}^l\right\|_2^2\right\}\\
    &=\frac{N^2}{b^2n^2}\sum_{l\in\mathcal{P}_t}\mathbb{E}\left\{\left\|n_l\mathbf{w}_{t}^l\right\|_2^2\right\}\\
    &=2pc_2^2N\log(1/\delta)\frac{\Xi_2^2T_g}{n^2\epsilon^2}\bar{n}^2.
\end{split} 
\end{equation*}
The second equality holds because $\mathbf{w}_{t}^l$ are independent and $\mathbb{E}\left\{\mathbf{w}_{t}^l\right\}=0$.
\end{proof}

\section{Proof of Lemma~\ref{lemma:unbaised sample}}
\begin{proof}
Taking expectation, we obtain
\begin{equation}\label{equ:unbiased 1}
\begin{split}
\mathbb{E}\left\{\bar{\nu}_k^b\right\}&=\frac{N}{b}\mathbb{E}\left\{\sum_{l\in\mathcal{P}_t}\frac{n_l}{n}\nu_k^l\right\}\\
&=\frac{N}{b}\sum_{\mathcal{P}\in\phi_b(\mathcal{N})}\frac{1}{c}\left(\sum_{l\in\mathcal{P}}\frac{n_l}{n}\nu_k^l\right),
\end{split}
\end{equation}
where $\phi_b(\mathcal{N})$ represents the set of all subset of $\mathcal{N}$ whose size is $b$, and $c=|\phi_b(\mathcal{N})|$. For instance, if $\mathcal{N}=\left\{1,2,3\right\}$, $\phi_2(\mathcal{N})=\left\{\left\{1,2\right\},\left\{1,3\right\},\left\{2,3\right\}\right\},c=3.$ Because we sample clients with equal probability and without replacement, each client appears identically and at most once in every $\mathcal{P}$. Therefore, since there are totally $cb$ clients sampled, each client is sampled $\frac{cb}{N}$ times. We change the order of summation and obtain 
\begin{equation*}
        \sum_{\mathcal{P}\in\phi_b(\mathcal{N}) }\sum_{l\in\mathcal{P}}\frac{n_l}{n}\nu_k^l=\sum_{l\in\mathcal{N}}\frac{cb}{N}\frac{n_l}{n}\nu_k^l.
\end{equation*}
Plugging this into (\ref{equ:unbiased 1}) and recalling the definition of $\bar{\nu}_k$, we finish the proof.
\end{proof}
\section{Proof of Theorem~\ref{theorem:upper bound}}\label{proof:theorem upper bound}
\begin{proof}
Based on the updating rule in (\ref{equ:update}), we have
\begin{equation}\label{equ:t3 update}
\begin{split}
\left\|\bar{\theta}_{k+1}-\theta^*\right\|_2^2=\begin{cases}
\left\|\bar{\nu}_{k+1}-\theta^*\right\|_2^2,&\text{if }k+1\notin\mathcal{C}_E,\\
\left\|\bar{\nu}_{k+1}^b+\mathbf{w}_{t}^b-\theta^*\right\|_2^2,&\text{if }k+1\in\mathcal{C}_E.\\
\end{cases}
\end{split}
\end{equation}

If $k+1\in\mathcal{C}_E$, we have
\begin{equation}\label{equ:t3 if in}
\begin{split}
\left\|\bar{\nu}_{k+1}^b+\mathbf{w}_{t}^b-\theta^*\right\|_2^2&=\left\|\bar{\nu}_{k+1}^b-\theta^*\right\|_2^2+\left\|\mathbf{w}_{t}^b\right\|_2^2\\
&\quad+2\left<\bar{\nu}_{k+1}^b-\theta^*,\mathbf{w}_{t}^b\right>.
\end{split}
\end{equation}
After taking expectation, the last term will vanish and the second term has been bounded by Corollary~\ref{cor:noise_lap} and~\ref{cor:noise_gau}. For the first term, we show that
\begin{equation*}
\begin{split}
\left\|\bar{\nu}_{k+1}^b-\theta^*\right\|_2^2&=\left\|\bar{\nu}_{k+1}^b-\bar{\nu}_{k+1}+\bar{\nu}_{k+1}-\theta^*\right\|_2^2\\
&=\left\|\bar{\nu}_{k+1}^b-\bar{\nu}_{k+1}\right\|_2^2+\left\|\bar{\nu}_{k+1}-\theta^*\right\|_2^2\\
&\quad+2\left<\bar{\nu}_{k+1}^b-\bar{\nu}_{k+1},\bar{\nu}_{k+1}-\theta^*\right>.
\end{split}
\end{equation*}

Taking expectation on both sides, using Lemma~\ref{lemma:unbaised sample} and combining Lemma~\ref{lemma:sample variance},~\ref{lemma:variance of E} and~\ref{lemma:upper bound of one step}, we obtain
\begin{equation}\label{equ:t3 not in}
\begin{split}
&\mathbb{E}\left\{\left\|\bar{\nu}_{k+1}^b-\theta^*\right\|_2^2\right\}
\\&\le(1-\mu\eta_{k})\mathbb{E}\left\{\left\|\bar{\theta}_k-\theta^*\right\|_2^2\right\}+6\lambda\eta_{k}^2\Gamma\\
&\quad+8\eta_{k}^2(E-1)^2G^2+4E^2\eta_{k}^2G^2\frac{N-b}{N-1}\frac{1}{b}.
\end{split}
\end{equation}
From Corollary~\ref{cor:noise_lap} and~\ref{cor:noise_gau}, we note that the sensitivity depends on the learning rate. Therefore, we divide the variance of the noise item into two parts, the learning rate, $\tilde{\eta}_t^2$, and other constants related to specific mechanisms.
 Since we assume $\eta_{k}\le2\eta_{k+E}\le2\eta_{k+E-1}$, we have $\tilde{\eta}_t^2\le 4\eta_{k}^2,\quad\text{if }k+1\in\mathcal{C}_E.$ As a result, $\mathbb{E}\left\{\left\|\mathbf{w}_t^b\right\|_2^2\right\}\le \eta_{k}^2\omega_1,\quad\text{if }k+1\in\mathcal{C}_E.$ 
 $\omega_1=C_\mathcal{M}E^2T_g^z $ and $C_\mathcal{M}$ represents a constant number related with the DP mechanism. For example, by substituting Corollary~\ref{cor:noise_lap}, we obtain $C_\mathcal{M}=8pb\frac{\xi_1^2}{N^2\epsilon^2}$ for Laplace mechanism; by substituting Corollary~\ref{cor:noise_gau}, we obtain $C_\mathcal{M}=8pc_2^2\log(1/\delta)\frac{\xi_2^2}{N\epsilon^2}$ for Gaussian mechanism.
Therefore, plugging (\ref{equ:t3 if in}) and (\ref{equ:t3 not in}) into (\ref{equ:t3 update}), we obtain
\begin{equation*}
Y_{k+1}\le
\begin{cases}
(1-\mu\eta_{k})Y_k+\eta_{k}^2\omega_0,&\text{if }k+1\notin\mathcal{C}_E,\\
(1-\mu\eta_{k})Y_k+\eta_{k}^2\omega_0+\eta_{k}^2\omega_1,&\text{if }k+1\in\mathcal{C}_E,\\
\end{cases}
\end{equation*}
where
\begin{equation*}
\begin{split}
&\omega_0=6\lambda\Gamma+8(E-1)^2G^2+4E^2G^2\frac{N-b}{N-1}\frac{1}{b}.\\
\end{split}
\end{equation*}
Note here we slightly enlarge the upper bound when $k+1\notin\mathcal{C}_E$ to highlight the impact of noise item, because $4E^2G^2\frac{N-b}{N-1}\frac{1}{b}$ does not cause any significant impact on the final convergence rate. 
Let $\eta_{k}=\frac{\alpha}{k+\gamma}$, where $\mu\alpha\ge2$ and $\gamma>1$ such that $\eta_k\le\min(\frac{1}{\mu},\frac{1}{4\lambda})$ and $\eta_{k}\le2\eta_{k+E}$, we claim that
\begin{equation}\label{equ:t3 res}
Y_{k}\le\frac{\varphi}{k+\gamma}+\frac{\alpha^2t}{(k+\gamma-1)^2}\omega_1,
\end{equation}
where $\varphi=\max(\frac{\alpha^2}{\alpha\mu-1}\omega_0,\gamma Y_0)$, $t=\lfloor\frac{k}{E}\rfloor$.
We will use induction to prove this. First of all, when $k=0$, this is obviously true according to the definition of $\varphi$. Assume this holds for $Y_k$. If $k+1\notin\mathcal{C}_E$, we have
\begin{equation}\label{equ:t3 not in step1}
\begin{split}
Y_{k+1}
&\le(1-\mu\eta_{k})Y_k+\eta_{k}^2\omega_0\\
&\le(\frac{k+\gamma-\mu\alpha}{k+\gamma})\left(\frac{\varphi}{k+\gamma}+\frac{\alpha^2t}{(k+\gamma-1)^2}\omega_1\right)\\
&\quad+\left(\frac{\alpha}{k+\gamma}\right)^2\omega_0\\
&=\frac{k+\gamma-1}{(k+\gamma)^2}\varphi+\frac{\alpha^2}{(k+\gamma)^2}\omega_0-\frac{\mu\alpha-1}{(k+\gamma)^2}\varphi\\
&\quad+\frac{(k+\gamma-\mu\alpha)\alpha^2t}{(k+\gamma)(k+\gamma-1)^2}\omega_1.\\
\end{split}
\end{equation}
Using that $\varphi\ge\frac{\alpha^2}{\mu\alpha-1}$, we obtain
\begin{equation}\label{equ:t3 not in step2}
\begin{split}
&\frac{k+\gamma-1}{(k+\gamma)^2}\varphi+\frac{\alpha^2}{(k+\gamma)^2}\omega_0-\frac{\mu\alpha-1}{(k+\gamma)^2}\varphi\\
&\le \frac{k+\gamma-1}{(k+\gamma)^2}\varphi \le \frac{\varphi}{k+\gamma+1}.
\end{split}
\end{equation}
Recalling that $\mu\alpha\ge2$, for the last term of (\ref{equ:t3 not in step1}), we have
\begin{equation}\label{equ:t3 not in step3}
\begin{split}
\frac{(k+\gamma-\mu\alpha)\alpha^2t}{(k+\gamma)(k+\gamma-1)^2}\omega_1
&\le\frac{(k+\gamma-2)}{(k+\gamma-1)^2-1}\frac{\alpha^2t}{k+\gamma}\omega_1\\
&=\frac{\alpha^2t}{(k+\gamma)^2}\omega_1.\\
\end{split}
\end{equation}
Plugging (\ref{equ:t3 not in step2}) and (\ref{equ:t3 not in step3}) into (\ref{equ:t3 not in step1}), we finally obtain
\begin{equation*}
Y_{k+1}\le \frac{\varphi}{k+\gamma+1}+\frac{\alpha^2t}{(k+\gamma)^2}\omega_1.
\end{equation*}

If $k+1\in\mathcal{C}_E$, we have
\begin{equation*}
\begin{split}
Y_{k+1}
&\le(1-\mu\eta_{k})Y_k+\eta_{k}^2\omega_0\\
&\le\left(\frac{k+\gamma-\mu\alpha}{k+\gamma}\right)\left(\frac{\varphi}{k+\gamma}+\frac{\alpha^2t}{(k+\gamma-1)^2}\omega_1\right)\\
&\quad+\left(\frac{\alpha}{k+\gamma}\right)^2\omega_0+\left(\frac{\alpha}{k+\gamma}\right)^2\omega_1\\
&=\frac{k+\gamma-1}{(k+\gamma)^2}\varphi+\frac{\alpha^2}{(k+\gamma)^2}\omega_0-\frac{\mu\alpha-1}{(k+\gamma)^2}\varphi\\
&\quad+\frac{(k+\gamma-\mu\alpha)\alpha^2t}{(k+\gamma)(k+\gamma-1)^2}\omega_1+\left(\frac{\alpha}{k+\gamma}\right)^2\omega_1.\\
\end{split}
\end{equation*}
The first three terms are as same as in (\ref{equ:t3 not in step1}); so we can use (\ref{equ:t3 not in step2}) again.

Recalling that $t=\lfloor \frac{k}{E}\rfloor$ and $k+1\in\mathcal{C}_E$, which implies that $\lfloor \frac{k+1}{E}\rfloor=\lfloor \frac{k}{E}\rfloor+1=t+1$, we only need to show
\begin{equation*}
\begin{split}
\frac{(k+\gamma-\mu\alpha)\alpha^2t}{(k+\gamma)(k+\gamma-1)^2}\omega_1+\left(\frac{\alpha}{k+\gamma}\right)^2\omega_1&\le\frac{\alpha^2(t+1)}{(k+\gamma)^2}\omega_1\\
\frac{(k+\gamma-\mu\alpha)\alpha^2t}{(k+\gamma)(k+\gamma-1)^2}\omega_1&\le \frac{\alpha^2t}{(k+\gamma)^2}\omega_1,\\
\end{split}
\end{equation*}
which has been proved in (\ref{equ:t3 not in step3}). Therefore, we proved that (\ref{equ:t3 res}) holds both when $k+1\notin\mathcal{C}_E$ and $k+1\in\mathcal{C}_E$. 

Let $\alpha=\frac{2}{\mu}$ and $\gamma=\max(E,8\frac{\lambda}{\mu})$, we have $\eta_k=\frac{2}{\mu}\frac{1}{k+\gamma}$. Therefore, we finally obtain
\begin{equation*}
    Y_k\le\frac{1}{k+\gamma}\left(\frac{4}{\mu^2}\omega_0+\gamma Y_0\right)+\frac{4}{\mu^2}\frac{t}{(k+\gamma-1)^2}\omega_1.
\end{equation*}
\end{proof}

\ifCLASSOPTIONcaptionsoff
  \newpage
\fi



%


\bibliographystyle{IEEEtran}
\bibliography{IEEEbib}

\end{document}